\newtheorem{theorem}{Theorem}
\newtheorem{lemma}[theorem]{Lemma}
\newtheorem{corollary}[theorem]{Corollary}
\newtheorem{observation}{Observation}
\theoremstyle{definition}
\newtheorem{definition}[theorem]{Definition}
\numberwithin{theorem}{section} \numberwithin{equation}{section}
\newenvironment{keyword}{\begin{flushleft}\textbf{KEYWORDS}\\}{\end{flushleft}}
\begin{document}

\title{An $O(\log n)$-Competitive Posted-Price Algorithm for Online Matching on the Line \thanks{Supported in part by NSF grants  CCF-1907673,  CCF-2036077, CCF-2209654 and an IBM Faculty Award.}}

\author{Stephen Arndt \thanks{Computer Science Department, University of Pittsburgh, Pittsburgh PA, 15260. sda19@pitt.edu} \and Josh Ascher\thanks{Computer Science Department, University of Pittsburgh, Pittsburgh PA, 15260. joa71@pitt.edu} \and Kirk Pruhs\thanks{Computer Science Department, University of Pittsburgh, Pittsburgh PA, 15260. kirk@cs.pitt.edu}}

\maketitle

\begin{abstract}
Motivated by demand-responsive parking pricing systems, we consider posted-price algorithms for the online metric matching problem. We give an $O(\log n)$-competitive posted-price randomized algorithm in the case that the metric space is a line. In particular, in this setting we show how to implement the ubiquitous  guess-and-double 
technique using prices.
\end{abstract}

\begin{keyword}
Online Algorithms, Metric Matching, Competitive Analysis
\end{keyword}

\section{Introduction}

In this paper we are generally interested in addressing a particular  difficulty that arises in 
the design of posted-price algorithms, which is a type of online algorithm that uses 
prices to incentive clients to take actions that increase the social good. Namely, we are interested in the  ``guess and double'' 
 technique that is ubiquitous in the online algorithms literature~\cite{borodin2005online}, but 
 is challenging to implement with prices. In particular we will address this difficulty
 within the context of the problem of online metrical matching on a line metric, 
 with the hope that the algorithmic techniques that we develop will be of use
 in addressing this difficulty in the setting of other online problems. 
 Before giving more details, we need to give some background information. 

As a motivating application for online metric matching, and for posted-price algorithms,
let us consider SFpark, which is San Francisco’s system for managing the availability of on-street parking \cite{SFParkHome,SFParkWikipedia,ACCESSMagazine_2018a}. The goal of SFpark is to reduce the time and fuel wasted by drivers searching for an open parking spot. The system monitors parking usages using sensors embedded in the pavement and distributes this information in real-time to drivers via SFpark.org and phone apps. SFpark periodically adjusts parking meter pricing to manage demand, to lower prices in under-utilized areas, and to raise prices in over-utilized areas. Several other cities in the world have similar demand-responsive parking pricing systems. For example, Calgary has had the ParkPlus system since 2008 \cite{CalgaryPark}.



The problem of centrally assigning drivers to parking spots to minimize time and fuel usage may be
reasonably modeled by the online metric matching problem.  The setting for this 
problem is a collection of servers $S = \{s_1, \dots, s_n\}$ (the parking spots) located at various locations in a metric space. In the case that the metric space is a line, we name the servers so that $s_1 \le s_2 \ldots \le s_n$. Over time a sequence $R = \{r_1, \dots, r_n\}$ of requests (the cars) arrive at various
locations in the metric space.
Upon the arrival of each request (car) $r_i$, the online algorithm must irrevocably be assigned $r_i$ to an available server (parking spot) $s_{\sigma(i)}$, which results in $s_{\sigma(i)}$ being unavailable going forward. Conceptually think of the request (car) $r_i$ moving to server (parking spot) $s_{\sigma(i)}$. 
Thus the cost incurred by such an assignment is the distance
$d(s_{\sigma(i)},r_i)$ between the location of $s_{\sigma(i)}$ and the location where $r_i$ arrived. 
The objective is to minimize the total cost of matching the requests (cars) to
the servers (parking spots).

However, in order to be implementable within the context of SFpark, online algorithms must be posted-price algorithms. In this setting, posted-price means that before each car arrives, the algorithm sets a price on each available parking spot without knowing the next car's arrival location. We assume each car is driven by   a selfish agent who moves to
the available  parking spot that minimizes the sum of the price of that
parking spot and the distance to that parking spot. The objective remains to minimize the aggregate distance traveled by the cars.
It is important to note that conceptually the objective of the parking pricing agency is minimizing
social cost (or equivalently maximizing social good), not maximizing revenue.

Research into posted-price algorithms for online metrical matching was initiated in \cite{cohen2015pricing}, as part of a line of research to study the use of posted-price 
algorithms to minimize social cost in online optimization problems. 
As a posted-price algorithm is a valid online algorithm, one cannot expect to obtain a better competitive ratio for posted-price algorithms than what is achievable
by online algorithms. So this research line has primarily focused on problems where
the optimal competitive ratio achievable by an online algorithm is (perhaps approximately) known,
and seeks to determine whether a similar competitive ratio can be
(again perhaps approximately) achieved by a posted-price algorithm. The higher-level goal is to determine the
increase in social cost that is necessitated by the restriction that an algorithm
has to use posted prices to incentivize selfish agents, instead of being able to mandate agent behavior.

Essentially all results in the posted-price online algorithms literature use one of two 
algorithmic design techniques.
The simpler algorithmic design paradigm is  called {\em mimicry}. A posted-price algorithm $\mathcal{A}$ {\em mimics}
an online algorithm $\mathcal{B}$ if the probability that $\mathcal{B}$ will take a particular action is equal to the 
probability that a self-interested
agent will choose this same action when the prices of actions are set using $\mathcal{A}$. 
However, many online algorithms are not mimickable. So another algorithmic design paradigm,  
called {\em monotonization}, first seeks to identify a sufficient property
for an online algorithm to be  mimickable, and then seeks  to design  an online algorithm with this property. 
In all the examples in the literature, the identified property involves some sort of monotonicity in the behavior of the algorithm. 
In particular, for online metric matching
on a tree metric (which includes a line as a special case), an online algorithm $\mathcal{A}$ is mimickable if and only if it is monotone in
the sense that as the request location moves closer to the location of an available server the
probability that the request is matched to that server cannot decrease~\cite{competitive-pricing}.


There are three online algorithms for online metric matching on a line that interest us here:
\begin{itemize}
    \item The Robust Matching (RM) algorithm is a deterministic primal-dual algorithm
    that is $\Theta(\log n)$-competitive~\cite{optimal-matching-raghvendra}.
     The Robust Matching algorithm is not mimickable~\cite{MaxBender}, and intuitively seems far from being mimickable.
    \item The Harmonic (H) algorithm is a randomized algorithm that is $\Theta(\log \Delta)$-competitive,
    where $\Delta$ is the ratio of the distance between the furthest pair of servers and the distance between the closest pair of servers~\cite{harmonic-alg}.
    The Harmonic algorithm chooses between the first available server to the left of the request and the
    first available server to the right of the request with probability inversely proportional to the distance from
    the request to these servers.
    \cite{cohen2015pricing} showed that the  Harmonic algorithm
is mimickable, thus obtaining an $O(\log \Delta)$-competitive posted-price
algorithm.
    \item The Doubled Harmonic (DH) algorithm is a randomized algorithm that is $O(\log n)$-competitive. Doubled Harmonic combines 
   a variation of Harmonic that uses an estimation $Z$ of the optimal cost (between the requests and the servers), 
   with a standard guess-and-double technique for maintaining a good estimate of the current
   optimal cost to date~\cite{harmonic-alg}. We show in \Cref{sec:DHNotMimickable} that
Doubled Harmonic is not mimickable.
    \end{itemize}

Thus the specific research question that we address is whether we can 
design a monotone variation of Doubled Harmonic that is $O(\log n)$-competitive,
thus leading to an $O(\log n)$-competitive posted-price algorithm.
But, even though it is the title of the paper, 
obtaining a better competitive ratio is only a secondary motivation for this research.
Our primary motivation is to determine whether in this setting we can
implement guess-and-double monotonically, with the hope that this will provide
insights into designing posted-price algorithms in other settings where the
standard online algorithms  use the ubiquitous guess-and-double technique.
To understand why answering this research question isn't completely straightforward, we need to first understand the Doubled Harmonic algorithm.

Firstly, for ease of presentation, we will make some simplifying assumptions, namely:
\begin{itemize}
    \item No pair of servers is closer than 1 unit of distance from 
    each other. We show that this is without loss of generality in \Cref{subsec:one_serv_per_loc}.
    \item All requests arrive at the location of some server. 
     We show that this is without loss of generality in \Cref{subsec:asmp_req_at_serv}.
\end{itemize}
Intuitively Doubled Harmonic modifies Harmonic in following ways~\footnote{Technically our description of Doubled Harmonic differs in some ways from how it is described in \cite{harmonic-alg}, but we believe that
our description is a bit simpler, and the same analysis holds.}.
Firstly, if the distance between consecutive servers is small (less than $Z/n^2$), where $Z$ is
the estimate of optimal maintained by the algorithm, then this distance is artificially inflated (to $Z/n^2$). Secondly, if the actual optimal cost between the requests and servers becomes at least the estimate $Z$, then the estimate $Z$ is increased geometrically until it exceeds the current optimal cost, and the algorithm conceptually reruns itself on all the requests
to date with this new estimate to compute which servers it would ideally like to be available now.
The algorithm then continues forward imagining these servers are available, and then 
correcting to the actually available servers using some optimal matching between the imaginary 
available servers and the actually available servers. Unfortunately the full algorithm, with
corner cases, is a bit more complicated.

\begin{definition}
We define the pseudo-distance $pd\left(s_i, s_{i+1}\right)$ between two adjacent servers $s_i$ and $s_{i+1}$
to be $\infty$ if $s_{i+1} - s_i \ge Z$, 
to be $Z/n^2$ if $s_{i+1} - s_i \le Z/n^2$,
and $s_{i+1} - s_i $ otherwise; here $Z$ will be a parameter in the algorithms.
We then define the pseudo-distance between two arbitrary servers
$s_i$ and $s_j$, where $i < j$ to be
$\sum_{h=i}^{j-1} pd\left(s_h, s_{h+1}\right)$.
\end{definition}

\begin{definition}[Doubled Harmonic Algorithm Description]\label{def:dh}
\end{definition}
Until a request arrives at a location where there is not an available server,
the request is assigned to the available server where it arrives.
When the first request $r_t$ arrives at a location where
there isn't an available server,  the Doubled Harmonic algorithm maintains the following invariants:
\begin{itemize}
    \item 
An estimate $Z= 10^j$, for some integer $j$, such that optimal cost
to date is at least $Z/10$ and is strictly less than $Z$.
\item
A set of 
imaginary servers $S_\iota = \{s_{\iota(1)}, \ldots s_{\iota(k)} \}$ 
that in some sense the algorithm imagines are available (but which may or may
not actually be available).  $S_\iota$ 
is initialized to $S - \{s_{\sigma(1)}, \ldots, s_{\sigma(t-1)}\}$.
\item The set  $S_\rho =\{s_{\rho(1)}, \ldots s_{\rho(k)} \}$ 
of servers that are really available. 
\item 
An arbitrary optimal matching $M$ between $S_\iota $ and $S_\rho$.
\end{itemize}

Then it responds to the arrival of a request $r_t$ in the following way:
\begin{itemize}

\item 
If $r_t$ is triggering, meaning that it causes the optimal cost to date to be at least $Z$, then the   estimate $Z$ is set to $10^j$ where $j$ is
the minimum integer that will reestablish the invariant on $Z$, and
the algorithm then performs what we call an adjustment operation (which we define below) up through request
$r_{t-1}$. 
\item If there is an imaginary server $s_{\iota(i)}$ at the location of $r_t$ then no action is taken (later we will think of this as an imaginary move of length 0).
\item If there is no imaginary server to the left of $r_t$ then it moves
to the first imaginary server to its right. This is called an imaginary move.
\item Else if there is no imaginary server to the right of $r_t$ then it moves
to the first imaginary server to its left. This is called an imaginary move.
    \item 
Else let $s_{\iota(h)}$  and
$s_{\iota(h+1)}$ be the first imaginary servers to the left and right of $r_t$, respectively. 
Then $r_t$ moves to $s_{\iota(h)}$ with probability $$L(s_{\iota(h)}, r_t, s_{\iota(h+1)}) = \frac{ pd(r_t, s_{\iota(h+1)})}{pd(r_t, s_{\iota(h)}) + pd(r_t, s_{\iota(h+1)})}$$
and $r_t$ moves to $s_{\iota(h+1)}$ with probability $$R(s_{\iota(h)}, r_t, s_{\iota(h+1)}) = \frac{ pd(r_t, s_{\iota(h)})}{pd(r_t, s_{\iota(h)}) + pd(r_t, s_{\iota(h+1)})}$$
So the algorithm chooses between the imaginary server to the left and the
imaginary server to the right with probability inversely proportional to the
pseudo-distance. Let us call this movement imaginary movement.
\item 
After the imaginary movement of the request to
a server in $s_{\iota(j)} \in S_\iota$, the request 
continues moving to the server in $s_{\rho(h)}  \in S_\rho $ that $s_{\iota(j)}$ is matched to in $M$,
which we  call  a  corrective move, and $s_{\iota(j)}$ is removed from  $ S_\iota$.
\end{itemize}

\begin{definition}[Adjustment Operation Description]\label{def:adjustment-op}
\end{definition}
This algorithm takes as input a
request $r_t$. The algorithm simulates Doubled Harmonic on all requests
up to $r_t$, sets $S_{\iota}$ to be the servers that would be available
at the end of this simulation, and recomputes an optimal matching $M$.

There are two reasons why modifying Doubled Harmonic to be monotone isn't straightforward (and presumably why this wasn't done in \cite{cohen2015pricing}):
\begin{enumerate}
    \item The first is that the behavior of the algorithm is quite different depending on whether the
    new request is triggering or not, which is challenging to implement with prices because the prices have to be set before the location of the request is known.
    \item The correction moves used by Doubled Harmonic are intuitively not coordinated with the imaginary moves.
\end{enumerate}

Our main contribution is an algorithm that we call Modified Doubled Harmonic (MDH) that
circumvents these issues by modifying Doubled Harmonic in the following way: 
\begin{enumerate}
    \item Triggering requests $r_t$ are just assigned as though they had appeared at a location $x$ near $r_t$ where $r_t$ would not have been triggering had it arrived at location $x$. Intuitively because triggering requests are rare, it's not particularly critical that they
    be handled cheaply.
    \item During the correction step the request moves in same direction as it would in Doubled
    Harmonic, but stops at the first available server.
    Note that this correction step cannot be implemented by any fixed
matching, as Doubled Harmonic does.
\end{enumerate}

One big hurdle in naturally extending poly-log competitiveness results on posted-price algorithms for online metric matching on a  spider metric~\cite{competitive-pricing,spidermatch} 
to tree metrics  is the seeming need to be able to implement guess-and-double in a monotonic way on a tree,
which was the main motivation for considering how to accomplish this on a line~\cite{MaxBender}.
So our takeaway is that this result suggests trying to design the correction step for a tree to be as flexible as possible, so as to make it as easy as possible to monotonically blend with the imaginary movement.

\subsection{Additional Related Work}\label{prior-work} 

Online metric matching was first studied in \cite{onlineweightedmatching,khullermitchell}, and each showed independently that $(2n-1)$-competitive is the optimal competitive ratio for deterministic algorithms in a general metric space. The best known competitive ratio for a randomized algorithm against an oblivious adversary is $O\left(\log^2 n\right)$ \cite{rand-alg-min-matching,random-O(log2k)}, and the best known lower bound is $\Omega(\log n)$. 


In this paper, we focus on matching on the line, which is perhaps the most interesting case. \cite{matching-on-a-line} gave the first deterministic, $o(n)$-competitive algorithm for this problem. \cite{elias-OML} showed that the Generalized Work Function algorithm is $\Omega(\log n)$ and $O(n)$ competitive. \cite{peserico2020matching} showed that no randomized algorithm can achieve a competitive ratio of $o\left(\sqrt{\log n}\right)$ for online matching on the line.

\cite{FeldmanFR17} shows how to set prices to mimic the $O(1)$-competitive  algorithm Slow-Fit from \cite{Aspnes1997,AzarKPPW97} for the problem of minimizing makespan on related machines. 
Monotonization is used in \cite{ImMPS17} to obtain an $O(1)$-competitive posted-price algorithm for minimizing maximum flow time on related machines. 

\section{Modified Doubled Harmonic Description}\label{mdh-description}

We explain the Modified Doubled Harmonic algorithm mainly in terms
of how it differs from Doubled Harmonic. 
Modified Doubled Harmonic makes the same initial assumptions about the
instance, and maintains the same invariants, as does Doubled Harmonic.
Intuitively Modified Doubled Harmonic modifies Doubled Harmonic in the following ways.
Firstly, it handles a triggering request (by pretending it arrived at a nearby point where the
request wouldn't have been triggering if it arrived there) before doing the double step of
guess-and-double. Secondly, during the correction step the request moves in same direction as it would in Doubled
    Harmonic, but stops at the first available server. Unfortunately the details of both
    of these two modifications are a bit complicated.

Note that the optimal matching $M$ between $S_\iota$ and $S_\rho$  partitions the real line into subintervals of three
different types:
\begin{description}
    \item[Left Islands] are maximal subintervals that contain points $x$ where an $s_{\iota(j)} \in S_{\iota}$ to the
    right of $x$ is matched to a  $s_{\rho(h)} \in S_{\rho}$ to the left of $x$ in $M$.
    \item [Right Islands] are maximal subintervals that contain points $x$ where an $s_{\iota(j)} \in S_{\iota}$ to the
    left of $x$ is matched to a  $s_{\rho(h)} \in S_{\rho}$ to the right of $x$ in $M$.
    \item[Stationary Islands] are maximal subintervals that are disjoint from left and right islands.
\end{description}
Note that this partitioning will be the same for all choices of $M$~\cite{optimal-matching-raghvendra}.

\begin{definition}[Modified Doubled Harmonic]\label{def:mdh}\end{definition} 
The algorithm behaves the same way as Doubled Harmonic up until the
first request that arrives at the location of an unavailable server.
The algorithm responds to the arrival of a subsequent request $r_t$ in the following manner:
\begin{enumerate}
\item If $r_t$ appears at the location of a   available server $s_{\rho(j)}$, then it is assigned to $s_{\rho(j)}$. 
\item Else if $r_t$ appears to the left of the leftmost  available server $s_{\rho(1)}$, 
then it is assigned to $s_{\rho(1)}$. 
\item
Else if $r_t$ appears to the right of the rightmost  available server $s_{\rho(k)}$, then it is assigned to $s_{\rho(k)}$. 
\item Else if $r_t$ is not triggering,
\begin{enumerate}
    \item If $r_t$ appears in a left island, it is assigned to the first  available server to its left.
    \item Else if $r_t$ appears in a right island, it is assigned to the first  available server to its right.
    \item Else let $s_{\iota(h)}$  and
$s_{\iota(h+1)}$ be the first imaginary servers to the left and right of $r_t$, respectively. 
Then $r_t$ moves to the first available  server to its left with probability $$L(s_{\iota(h)}, r_t, s_{\iota(h+1)}) = \frac{ pd(r_t, s_{\iota(h+1)})}{pd(r_t, s_{\iota(h)}) + pd(r_t, s_{\iota(h+1)})}$$
and $r_t$ moves to the first available  server to its right with probability $$R(s_{\iota(h)}, r_t, s_{\iota(h+1)}) = \frac{ pd(r_t, s_{\iota(h)})}{pd(r_t, s_{\iota(h)}) + pd(r_t, s_{\iota(h+1)})}$$
So the algorithm chooses between the imaginary server to the left and the
imaginary server to the right with probability inversely proportional to the
pseudo-distance, and then moves to the nearest  available server in that direction.
\end{enumerate}

\item Else (Comment: $r_t$ is triggering)
\begin{enumerate}
\item 
Let $s_{\rho(h)}$  and
$s_{\rho(h+1)}$ be the first available  servers to the left and right of $r_t$, respectively.
\item Let $y_\ell$ be defined in the following way: If one moves from $r_t$ to the left, let $y_\ell$ be the first
point $x$ that one comes to where either $r_t$ would not have been triggering
if it had arrived at $x$, or $x$ is the location of $s_{\rho(h)}$.
\item Let $y_r$ be defined in the following way:
If one moves from $r_t$ to the right, let $y_r$ be the first
point $x$ that one comes to where either $r_t$ would not have been triggering
if it had arrived at $x$, or $x$ is the location of $s_{\rho(h+1)}$.
\item 
Let $m$ be the midpoint between $s_{\rho(h)}$  and
$s_{\rho(h+1)}$.
\item If $R(s_{\rho(h)}, y_r, s_{\rho(h+1)}) < \frac{1}{2}$ then mimic the assignment of a request appearing at $y_r$.
    \item Else if $R(s_{\rho(h)}, y_\ell, s_{\rho(h+1)}) > \frac{1}{2}$
then mimic the assignment of a request appearing at $y_\ell$.
 \item Else if $r_t < m$ then mimic the assignment of a request appearing at $y_\ell$.
    \item Else $r_t \geq m$, and mimic the assignment of a request appearing at $y_r$.
\end{enumerate}
\item If $r_t$ was triggering (this could happen in Cases 1, 2, 3, or 5), 
the algorithm updates the estimate $Z$ and calls the adjustment operation
up through request $r_t$ (note the adjustment operation was defined
when we defined Doubled Harmonic).
\end{enumerate}

To show Modified Doubled Harmonic is well-defined, we make the following observations.

\begin{observation}\label{obs:well_defined}
The following hold for Case 4 of the definition of Modified Doubled Harmonic.
\begin{enumerate}[(a)]
    \item If $r_t$ appears in a left island, then it has an available server to its left. 
    \item If $r_t$ appears in a right island, then it has an available server to its right.
    \item If $r_t$ appears in a stationary island, then there are imaginary servers on each side of $r_t$.
\end{enumerate}

\end{observation}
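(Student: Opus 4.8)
The plan is to reduce all three parts to a counting statement about how many servers of $S_\iota$ versus $S_\rho$ lie on each side of $r_t$, exploiting the order structure of optimal matchings on the line. Since the excerpt notes that the left/right/stationary partition does not depend on the choice of optimal matching $M$, I would fix $M$ to be order-preserving: writing $S_\iota=\{a_1<\dots<a_k\}$ and $S_\rho=\{b_1<\dots<b_k\}$ in increasing order, $M$ pairs $a_i$ with $b_i$, and this matching is optimal on the line. The key lemma I would prove is that for any point $x$ not at a server of $S_\iota\cup S_\rho$, $x$ lies in a left island iff strictly more elements of $S_\rho$ than of $S_\iota$ lie to the left of $x$, in a right island iff the reverse strict inequality holds, and in a stationary island iff the two counts coincide. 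This is a short computation with the sorted matching: if exactly $q$ elements of $S_\iota$ are left of $x$ then $a_{q+1}>x$, so the matched pair $(b_{q+1},a_{q+1})$ straddles $x$ with the available server on the left precisely when $b_{q+1}<x$, i.e.\ precisely when at least $q+1$ elements of $S_\rho$ are left of $x$; conversely any straddling pair $(b_i,a_i)$ with $b_i<x<a_i$ forces at least $i$ elements of $S_\rho$ and at most $i-1$ elements of $S_\iota$ to lie left of $x$.

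Given this, I would dispatch the three parts. In Case 4 none of Cases 1--3 applies, so $r_t$ is not at an available server and $s_{\rho(1)}<r_t<s_{\rho(k)}$; hence there is at least one available server strictly left of $r_t$ and at least one strictly right of $r_t$, which already yields (a) and (b) (their island hypotheses are not even needed). For (c), suppose $r_t$ lies in a stationary island but, for contradiction, no element of $S_\iota$ lies strictly left of $r_t$; then the count of $S_\iota$-elements left of $r_t$ is $0$ while the count of $S_\rho$-elements left of $r_t$ is at least $1$ (namely $b_1=s_{\rho(1)}<r_t$), so by the lemma $r_t$ is in a left island --- concretely, the pair $(b_1,a_1)\in M$ straddles $r_t$ --- contradicting stationarity. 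The symmetric argument, using $r_t<s_{\rho(k)}=b_k$ and the pair $(b_k,a_k)$, shows some element of $S_\iota$ lies strictly right of $r_t$. Thus $s_{\iota(h)}$ and $s_{\iota(h+1)}$ in Case 4(c) are well-defined.

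The step I expect to be the main obstacle is the boundary bookkeeping: pinning down the island characterization cleanly from the straddling-pair definition (including mutual exclusivity of left and right straddles, which relies on $M$ being order-preserving), and the case in which $r_t$ coincides with the location of a server in $S_\iota\setminus S_\rho$, where ``strictly to the left/right'' must be read carefully. A short check handles the latter: if such an $r_t$ were the leftmost element of $S_\iota$ it would, depending on where $b_1$ falls, either lie in a left island or else trigger Case 1 or Case 2, so it cannot both reach Case 4 and lie in a stationary island without having an element of $S_\iota$ on its left; symmetrically on the right. Everything else is routine once the sorted-matching reduction is set up.
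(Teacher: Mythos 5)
Your proof is correct, and it rests on the same structural fact the paper exploits, namely that the island classification is governed by the left/right balance of $S_\iota$ against $S_\rho$ under the order-preserving optimal matching; you isolate this as an explicit counting lemma, whereas the paper leaves it implicit. For (a) and (b) your route is genuinely different from the paper's: the paper reads them directly off the definitions of left and right islands (a left-island point by definition has a matched pair whose $S_\rho$ endpoint lies to its left, so the available server is immediate), whereas you observe that Case 4 already forces $s_{\rho(1)} < r_t < s_{\rho(k)}$ and hence available servers on both sides regardless of island type. Your version is arguably cleaner and yields a slightly stronger statement, at the cost of invoking Case 4 structure the paper doesn't need for these two parts. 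For (c), your deduction from the counting lemma makes rigorous exactly what the paper compresses into ``$r_t$ has available servers on each side, and so it must have imaginary servers on each side.'' You correctly flag the one genuine subtlety --- $r_t$ landing exactly on a point of $S_\iota \setminus S_\rho$, where the count-based characterization and the phrase ``on each side'' both become ambiguous --- and while your treatment of that case is somewhat hand-wavy, it is no less rigorous than the paper's own proof, which does not acknowledge the issue at all.
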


\begin{proof}
The first two observations follow directly from the definitions of Left Island and Right Island. The third observation follows from the fact that $r_t$ has available servers on each side, and so it must have imaginary servers on each side.
\end{proof}

\section{Monotonicity Analysis}\label{sec:mono_analysis} 
Note that Modified Doubled Harmonic is a  \textit{neighbor} algorithm, that is it always assigns requests to a neighboring server.
In \Cref{lemma:neighbor_monotone} we show that if a neighbor algorithm is monotone on intervals between adjacent available servers $\left(s_{\rho(i)}, s_{\rho(i+1)}\right)$ then it is monotone. In \Cref{lemma:mono_non_trigger} we analyze the probability of a non-triggering request in  $\left(s_{\rho(i)}, s_{\rho(i+1)}\right)$ being assigned to $s_{\rho(i+1)}$.
In \Cref{lemma:constant_trigger} we analyze the probability of a triggering request in  $\left(s_{\rho(i)}, s_{\rho(i+1)}\right)$ being assigned to $s_{\rho(i+1)}$.
Then we conclude in Theorem \ref{thm:mdh-monotone}  that Modified Doubled Harmonic is monotone on each interval $\left(s_{\rho(i)}, s_{\rho(i+1)}\right)$.

Let $r_t \rightarrow s_{\rho(j)}$ denote the event that request $r_t$ is matched to $s_{\rho(j)}$.  We will use the notation $r_t = x$ as shorthand for $r_t$ arrived at location $x$.  We say a point $x$ on the line is a trigger point if a request arriving at location  $x$ would be a triggering request, and otherwise we say $x$ is a non-trigger point.

\begin{lemma}\label{lemma:neighbor_monotone}
A neighbor algorithm $\mathcal{A}$ is monotone if, 
for all intervals of adjacent available servers $\left(s_{\rho(i)}, s_{\rho(i+1)}\right)$, $\Pr\left[r_t \xrightarrow{\mathcal{A}} s_{\rho(i+1)} \ | \ r_t = x\right]$ is non-decreasing across $\left(s_{\rho(i)}, s_{\rho(i+1)}\right)$.
\end{lemma}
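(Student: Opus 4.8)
The plan is to reduce the global monotonicity requirement to the per-interval hypothesis by exploiting the defining feature of a neighbor algorithm: a request at location $x$ can only be matched to one of its (at most) two immediate available neighbors, so the match distribution at $x$ is supported on those two servers and is governed by a single scalar --- the probability of going to the right neighbor --- which the hypothesis asserts is monotone on each interval between adjacent available servers. Concretely, fix an available server $s_{\rho(j)}$ and set $f(x) := \Pr[\, r_t \xrightarrow{\mathcal{A}} s_{\rho(j)} \mid r_t = x\,]$. Recalling that $\mathcal{A}$ is monotone iff, for every available server, this function is non-decreasing to the left of that server and non-increasing to the right of it, it suffices by the left--right symmetry of the setup to show $f$ is non-decreasing on $(-\infty, s_{\rho(j)}]$.

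First I would pin down the support of $f$ on the left half-line. If $x$ lies at or to the left of $s_{\rho(j-1)}$, the available server immediately preceding $s_{\rho(j)}$, then $s_{\rho(j)}$ is not a neighbor of $x$, so $f(x) = 0$; here there is no such $s_{\rho(j-1)}$ when $s_{\rho(j)}$ is the leftmost available server, in which case a neighbor algorithm must match every request left of $s_{\rho(j)}$ to it, so $f \equiv 1$ on $(-\infty, s_{\rho(j)}]$, which is trivially non-decreasing. On the open interval $(s_{\rho(j-1)}, s_{\rho(j)})$ the server $s_{\rho(j)}$ is exactly the first available server to the right of $x$, so there $f(x)$ coincides with the ``go right'' probability, which is non-decreasing in $x$ by the hypothesis of the lemma applied with $i = j-1$. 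At the endpoint $x = s_{\rho(j)}$ the request arrives at an available server and a neighbor algorithm assigns it there, so $f(s_{\rho(j)}) = 1$.

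It then remains to glue the pieces: $f \equiv 0$ on $(-\infty, s_{\rho(j-1)}]$, $f$ is non-decreasing with values in $[0,1]$ on $(s_{\rho(j-1)}, s_{\rho(j)})$, and $f(s_{\rho(j)}) = 1$. Since $0$ lower-bounds and $1$ upper-bounds the values on the middle piece, monotonicity is not broken at either junction, so $f$ is non-decreasing on all of $(-\infty, s_{\rho(j)}]$. The argument for $x \ge s_{\rho(j)}$ is the mirror image: on $(s_{\rho(j)}, s_{\rho(j+1)})$ one has $f(x) = 1 - \Pr[\, r_t \xrightarrow{\mathcal{A}} s_{\rho(j+1)} \mid r_t = x\,]$, which is non-increasing because the subtracted term is non-decreasing by hypothesis (applied with $i = j$); $f \equiv 0$ beyond $s_{\rho(j+1)}$; and the rightmost-server case is handled as before.

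I do not anticipate a genuine obstacle. The only things to watch are the boundary cases --- requests landing exactly on an available server, requests outside the span of the available servers, and $s_{\rho(j)}$ being the first or last available server --- together with remembering to use the complement $1 - (\text{go-right probability})$ on the correct side of $s_{\rho(j)}$. Once the observation ``a neighbor algorithm's match distribution at $x$ is supported on $x$'s two available neighbors and split by a single hypothesis-monotone quantity'' is isolated, the rest is bookkeeping.
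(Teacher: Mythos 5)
Your proof is correct and follows essentially the same route as the paper's: both rest on the observation that, for a neighbor algorithm, the probability of matching to $s_{\rho(j)}$ vanishes at or beyond the adjacent available server, equals $1$ at $s_{\rho(j)}$ itself, and on the open interval between adjacent available servers is exactly the hypothesis quantity, so the pieces glue monotonically. The only difference is organizational (you fix the server and analyze $f$ across the half-line; the paper picks arbitrary $u, v$ and runs casework), but the substance is identical.
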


\begin{proof}
Suppose for all intervals of adjacent available servers \newline$\left(s_{\rho(i)}, s_{\rho(i+1)}\right)$, $\Pr\left[r_t \rightarrow s_{\rho(i+1)} \ | \ r_t = x\right]$ is non-decreasing across $\left(s_{\rho(i)}, s_{\rho(i+1)}\right)$. Let $u, v, s_{\rho(j+1)} \in \mathbb{R}^1$ be arbitrary such that $v \in [u, s_{\rho(j+1)}]$ and $\mathcal{A}$ has an available server at $s_{\rho(j+1)}$. We want to show the following monotonicity condition holds:

\[ \Pr[r_t \rightarrow s_{\rho(j+1)} \ | \ r = u] \leq \Pr[r_t \rightarrow s_{\rho(j+1)} \ | \ r = v] \]

We proceed by simple casework. If $u = v$, then we have equality; and if $v = s_{\rho(j+1)}$, then $\Pr[r_t \rightarrow s_{\rho(j+1)} \ | \ r_t = v] = 1$. Thus it remains to consider $v \in \left(u, s_{\rho(j+1)}\right)$. If $s_{\rho(j)} \in [u, s_{\rho(j+1)})$, then $\Pr[r_t \rightarrow s_{\rho(j+1)} \ | \ r_t = u] = 0$. Otherwise, if there does not exist an available server to the left of $u$, then $\Pr[r_t \rightarrow s_{\rho(j+1)} \ | \ r_t = v] = 1$. Thus, it remains to consider the case where $u, v \in \left(s_{\rho(j)}, s_{\rho(j+1)}\right)$ for adjacent available servers at $s_{\rho(j)}, s_{\rho(j+1)}$. We know $\Pr\left[r_t \rightarrow s_{\rho(j+1)} \ | \ r_t = x\right]$ is non-decreasing across this interval, and so we must have $\Pr[r_t \rightarrow s_{\rho(j+1)} \ | \ r_t = u] \leq \Pr[r \rightarrow s_{\rho(j+1)} \ | \ r_t = v]$. Thus in all cases, the monotonicity condition holds. 
If instead we pick $u, v, s_{\rho(j+1)} \in \mathbb{R}^1$ arbitrary with $v \in [s_{\rho(j+1)}, u]$, the same reasoning holds. Thus the described condition implies $\mathcal{A}$ is monotone, and so it is equivalent to monotonicity for neighbor algorithms.

Let $r_t \xrightarrow{\text{MDH}} s_{\rho(j)}$ denote the event that request $r_t$ is matched to available server $s_{\rho(j)}$ using Modified Doubled Harmonic. We now fix an arbitrary interval of adjacent available servers $\left(s_{\rho(i)}, s_{\rho(i+1)}\right)$.
\end{proof}

\begin{lemma}\label{lemma:mono_non_trigger}
$\Pr\left[r \xrightarrow{\emph{MDH}} s_{\rho(i+1)} \ | \ r_t = x\right]$ is non-decreasing across the non-trigger points in $\left(s_{\rho(i)}, s_{\rho(i+1)}\right)$.
\end{lemma}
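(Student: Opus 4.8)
The plan is to carry the analysis entirely inside the fixed interval $I := (s_{\rho(i)}, s_{\rho(i+1)})$ and exploit its island decomposition. A non-trigger request arriving at $x \in I$ is handled by Case~4 of \Cref{def:mdh}: Cases~1--3 do not apply because $s_{\rho(i)}$ and $s_{\rho(i+1)}$ are \emph{adjacent} available servers, so the interior of $I$ contains no available server and $x$ is not outside the range of available servers; and Case~5 does not apply because $x$ is a non-trigger point. Writing $p(x) := \Pr[r_t \xrightarrow{\text{MDH}} s_{\rho(i+1)} \mid r_t = x]$, Case~4 gives $p(x) = 0$ on left islands (Case~4a sends the request to $s_{\rho(i)}$), $p(x) = 1$ on right islands (Case~4b sends it to $s_{\rho(i+1)}$), and $p(x) = R(s_{\iota(h)}, x, s_{\iota(h+1)}) \in [0,1]$ on stationary islands, where $s_{\iota(h)}, s_{\iota(h+1)}$ are the imaginary servers immediately flanking $x$. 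Since $0 \le R \le 1$, it then suffices to prove three claims: (i) in $I$, every left-island point lies to the left of every stationary point, which lies to the left of every right-island point; (ii) the stationary points of $I$ form a single subinterval over which the flanking pair $(s_{\iota(h)}, s_{\iota(h+1)})$ is constant; and (iii) over that subinterval, $x \mapsto R(s_{\iota(h)}, x, s_{\iota(h+1)})$ is non-decreasing.

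Claims~(i) and~(ii) both come from one observation: since $s_{\rho(i)}$ and $s_{\rho(i+1)}$ are consecutive in $S_\rho$, any available server strictly below $s_{\rho(i+1)}$ is at most $s_{\rho(i)}$, and any available server strictly above $s_{\rho(i)}$ is at least $s_{\rho(i+1)}$. For~(i): if $x_1 < x_2$ lie in $I$ and $x_2$ is a left-island point, witnessed by some $s_{\iota(j)} > x_2$ matched in $M$ to some $s_{\rho(m)} < x_2$, then $s_{\rho(m)} \le s_{\rho(i)} < x_1 < s_{\iota(j)}$, so the same pair witnesses that $x_1$ is a left-island point; the mirror argument handles right islands. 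Since the three island types partition the line~\cite{optimal-matching-raghvendra}, the classification of each point is unambiguous, so the left-island points of $I$ form a prefix, the right-island points a suffix, and the stationary points the interval in between --- each a single island. For~(ii): if a stationary island $J \subseteq I$ had an imaginary server $s_{\iota(j)}$ in its interior, its $M$-partner $s_{\rho(m)}$ would satisfy $s_{\rho(m)} \ne s_{\iota(j)}$ (no available server in the interior of $I$); if $s_{\rho(m)} < s_{\iota(j)}$ then any $z$ with $\max(\inf J, s_{\rho(m)}) < z < s_{\iota(j)}$ would be a left-island point inside $J$, a contradiction, and $s_{\rho(m)} > s_{\iota(j)}$ is symmetric. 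So the interior of $J$ carries no imaginary server, and by \Cref{obs:well_defined} each interior point of $J$ has imaginary servers on both sides; as none lie in the interior of $J$, the nearest ones are the same pair $s_{\iota(h)}, s_{\iota(h+1)}$ for all of them.

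For~(iii), fix $h$ and let $r_t = x$ range over the server locations strictly between $s_{\iota(h)}$ and $s_{\iota(h+1)}$; then $pd(x, s_{\iota(h)})$ is a non-decreasing, and $pd(x, s_{\iota(h+1)})$ a non-increasing, partial sum of positive per-gap pseudo-distances, so $R(s_{\iota(h)}, x, s_{\iota(h+1)}) = 1 - \big(1 + pd(x, s_{\iota(h)})/pd(x, s_{\iota(h+1)})\big)^{-1}$ is non-decreasing. Assembling: for non-trigger $x \le y$ in $I$, if $x$ is in the left region then $p(x) = 0 \le p(y)$; if $x$ is in the stationary region then $y$ lies in the stationary region (where $p(x) = R(\cdot, x, \cdot) \le R(\cdot, y, \cdot) = p(y)$) or the right region (where $p(y) = 1 \ge p(x)$), but not the left region, which precedes $x$; and if $x$ is in the right region then so is $y$, giving $p(x) = p(y) = 1$.

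The mathematical content lives in~(i) and~(ii), two short facts about optimal matchings on a line. The step I expect to require real care is not any of these but the treatment of a request landing exactly on an imaginary server: such a server can sit on the boundary between a left island and a stationary island (or between a stationary island and a right island), where the one-sided values of $p$ disagree, and Case~4 as literally stated does not obviously pin down $p$ there. One must use the convention inherited from Doubled Harmonic --- a length-$0$ imaginary move to that server followed by a corrective move in the $M$-direction, stopping at the first available server --- and check that at a left/stationary boundary the imaginary server there is matched leftward (so $p = 0$, matching the left limit) while at a stationary/right boundary it is matched rightward (so $p = 1$), which follows because otherwise the neighbouring points would be misclassified. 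A secondary nuisance is the degenerate case in which $pd(x, s_{\iota(h)})$ or $pd(x, s_{\iota(h+1)})$ is $\infty$; this is pure bookkeeping, since the natural interpretation of an infinite flanking pseudo-distance either makes $R$ equal $0$ or makes it jump up to a finite value, in both cases preserving monotonicity.
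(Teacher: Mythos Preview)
Your argument is correct and follows the same approach as the paper: decompose $(s_{\rho(i)},s_{\rho(i+1)})$ into a left island, a stationary island, and a right island in that order, then observe that $p(x)$ is $0$, then the monotone function $R(s_{\iota(h)},x,s_{\iota(h+1)})$, then $1$. The paper's proof is a three-sentence sketch that delegates the ordering of the islands to~\cite{optimal-matching-raghvendra}, whereas you prove claims~(i)--(iii) from scratch and also handle the boundary and infinite-pseudodistance corner cases the paper leaves implicit; the underlying structure is the same.
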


\begin{proof}
    Note that the interval $\left(s_{\rho(i)}, s_{\rho(i+1)}\right)$ can be expressed as the union of a left island, a stationary island, and a right island (any two of which could possibly be empty). Since \cite{optimal-matching-raghvendra} guarantees they must appear in this order, the fact that MDH assigns a request $r_t$ in a stationary island to $s_{\rho(i+1)}$ with probability inversely proportional to its pseudodistance from $s_{\rho(i+1)}$ yields the result. 
\end{proof}

\begin{lemma}\label{lemma:constant_trigger}
For all subintervals $\left(x_L, x_R\right) \subseteq \left(s_{\rho(i)}, m\right) \cup \left(m, s_{\rho(i+1)}\right)$ containing only trigger points, where $m$ is the midpoint of $\left(s_{\rho(i)}, s_{\rho(i+1)}\right)$, we have $\Pr\left[r_t \xrightarrow{\emph{MDH}} s_{\rho(i+1)} \ | \ r_t = x\right]$ is constant across $\left(x_L, x_R\right)$.
\end{lemma}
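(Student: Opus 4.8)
The plan is to show that for every location $x\in(x_L,x_R)$, processing a request $r_t=x$ under Modified Doubled Harmonic executes Case~5 of \Cref{def:mdh}, and that, independently of $x$, the algorithm ends up mimicking a request that arrives at one and the same point; the claimed constancy is then immediate. First I would record the setup: since $(x_L,x_R)\subseteq(s_{\rho(i)},s_{\rho(i+1)})$ and $s_{\rho(i)},s_{\rho(i+1)}$ are adjacent available servers, for any $x\in(x_L,x_R)$ the nearest available server to the left is $s_{\rho(i)}$ and the nearest one to the right is $s_{\rho(i+1)}$; and since $x$ is a trigger point by hypothesis, Cases~1--4 do not apply, so Case~5 fires with $s_{\rho(h)}=s_{\rho(i)}$, $s_{\rho(h+1)}=s_{\rho(i+1)}$, and with the midpoint named there equal to the $m$ of the lemma.

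The key step is to prove that the points $y_\ell$ and $y_r$ constructed in Case~5 do not depend on $x\in(x_L,x_R)$. Here I would first note that whether a request $r_t$ arriving at a location $z$ is triggering depends only on $z$ and on the already-fixed data $r_1,\dots,r_{t-1}$ and $Z$, so ``trigger point'' is a property of a location, independent of where $r_t$ actually arrives. Now take $x_1<x_2$ in $(x_L,x_R)$. The open interval $(x_1,x_2)$ lies inside $(x_L,x_R)$, hence contains only trigger points and no available server. Therefore, walking leftward from $x_2$, none of the stopping conditions defining $y_\ell$ (hitting a non-trigger point, or hitting $s_{\rho(h)}$) is met before the walk reaches $x_1$, after which it coincides with the leftward walk from $x_1$; hence $y_\ell$ computed at $x_2$ equals $y_\ell$ computed at $x_1$, and the symmetric argument handles $y_r$. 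Denote the common values by $y_\ell,y_r$, noting $y_\ell\in[s_{\rho(i)},s_{\rho(i+1)})$ and $y_r\in(s_{\rho(i)},s_{\rho(i+1)}]$.

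It remains to trace the four-way branch inside Case~5. The first two guards, $R(s_{\rho(i)},y_r,s_{\rho(i+1)})<\tfrac12$ and $R(s_{\rho(i)},y_\ell,s_{\rho(i+1)})>\tfrac12$, involve only $y_\ell$, $y_r$, and the two now-fixed servers, so their truth values are the same for all $x\in(x_L,x_R)$: when the first holds, every $x$ mimics a request at the fixed point $y_r$; when only the second holds, every $x$ mimics a request at the fixed point $y_\ell$. In the remaining branch the choice is $y_\ell$ if $x<m$ and $y_r$ if $x\ge m$, and here I invoke the hypothesis $(x_L,x_R)\subseteq(s_{\rho(i)},m)\cup(m,s_{\rho(i+1)})$: being a connected interval disjoint from $\{m\}$, $(x_L,x_R)$ lies wholly in $(s_{\rho(i)},m)$ or wholly in $(m,s_{\rho(i+1)})$, so again every $x$ mimics a request at the same point. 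In all cases there is a single $y^\star\in\{y_\ell,y_r\}\subseteq[s_{\rho(i)},s_{\rho(i+1)}]$ such that every request $r_t=x$ with $x\in(x_L,x_R)$ is assigned exactly as a request arriving at $y^\star$ would be; hence $\Pr[r_t\xrightarrow{\mathrm{MDH}}s_{\rho(i+1)}\mid r_t=x]$ equals the constant $\Pr[r_t\xrightarrow{\mathrm{MDH}}s_{\rho(i+1)}\mid r_t=y^\star]$ (which is $0$ if $y^\star=s_{\rho(i)}$, $1$ if $y^\star=s_{\rho(i+1)}$, and otherwise the Case~4 value from \Cref{lemma:mono_non_trigger} at the non-trigger point $y^\star$).

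I expect the only real obstacle to be the key step, namely verifying that $y_\ell$ and $y_r$ are genuinely independent of $x$ on $(x_L,x_R)$. This hinges on two facts about $(x_L,x_R)$ — that it contains only trigger points and that it contains no available server — which together force the leftward (resp.\ rightward) walks launched from different points of the interval to synchronize before any stopping condition can fire. The rest is a routine pass through the four alternatives of Case~5 together with the observation that the interval sits entirely on one side of $m$.
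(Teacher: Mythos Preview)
Your proof is correct and follows essentially the same approach as the paper's: the paper observes in one sentence that the only data Case~5 uses are the adjacent non-trigger points (or interval endpoints) and the position of $r_t$ relative to $m$, and that these are identical throughout $(x_L,x_R)$. You unpack this observation carefully, proving explicitly that $y_\ell,y_r$ are independent of $x$ via the walk-synchronization argument and then tracing through the four sub-branches of Case~5, but the underlying idea is the same.
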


\begin{proof}
Let $(x_L,x_R)\subseteq \left(s_{\rho(i)}, m\right) \cup \left(m, s_{\rho(i+1)}\right)$ containing only trigger points be arbitrary. Note that the only information used to make the assignments of triggering requests are the adjacent non-trigger points (or endpoints of the interval) and the arrival location of the triggering requests relative to the midpoint. Since $(x_L,x_R)$ contains no non-trigger points and is entirely contained on one side of $m$, all of this information is identical. Thus, all requests in $(x_L,x_R)$ have the same probability of being assigned to $s_{\rho(i+1)}$. 
\end{proof}

\begin{theorem}\label{thm:mdh-monotone} Modified Doubled Harmonic is monotone.
\end{theorem}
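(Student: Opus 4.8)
The plan is to deduce the theorem from \Cref{lemma:neighbor_monotone} together with the three lemmas above. Since Modified Doubled Harmonic is a neighbor algorithm (noted at the start of \Cref{sec:mono_analysis}), \Cref{lemma:neighbor_monotone} reduces the claim to showing that for every interval $(s_{\rho(i)},s_{\rho(i+1)})$ of adjacent available servers the quantity $p(x):=\Pr[r_t\xrightarrow{\text{MDH}} s_{\rho(i+1)}\mid r_t=x]$ is non-decreasing across that interval. Fix such an interval, with midpoint $m$. The interval splits into its non-trigger points and its trigger points, and I would record three facts: $p$ is non-decreasing on the non-trigger points by \Cref{lemma:mono_non_trigger}; $p$ is constant on any run of trigger points lying entirely to one side of $m$ by \Cref{lemma:constant_trigger}; and, at the two interval endpoints, $p$ equals $0$ and $1$ respectively by Cases~1--3 of \Cref{def:mdh}. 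What remains is to glue these pieces, i.e.\ to compare $p(u)$ and $p(v)$ for two points $u<v$ when at least one of them is a trigger point, and to rule out a downward jump at $m$.

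The engine for the gluing is that Case~5 of \Cref{def:mdh} serves a triggering request $r_t$ by \emph{mimicking} a request placed at $y_\ell$ or at $y_r$, and $y_\ell$ (resp.\ $y_r$) is by construction the first non-trigger point --- or the endpoint $s_{\rho(i)}$ (resp.\ $s_{\rho(i+1)}$) --- met when scanning left (resp.\ right) from $r_t$. Hence for a trigger point $x$ we have $p(x)\in\{p(y_\ell(x)),p(y_r(x))\}$, where each of $y_\ell(x),y_r(x)$ is either a non-trigger point of the interval or an endpoint. I would then argue by cases on $u<v$. If $u$ is a trigger point and $v$ is not, then either $p(u)=p(y_\ell(u))$ with $y_\ell(u)\le u<v$, or $p(u)=p(y_r(u))$ with $y_r(u)\le v$ (because $v$ is itself a non-trigger point to the right of $u$, so the rightward scan from $u$ stops no later than $v$); in both cases $p(u)\le p(v)$ follows from \Cref{lemma:mono_non_trigger} (or from the endpoint value). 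The case $u$ non-trigger, $v$ trigger is symmetric. If both $u$ and $v$ are trigger points and some non-trigger point $w$ lies in $(u,v)$, apply the previous two cases to $(u,w)$ and $(w,v)$.

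The last case is two trigger points $u<v$ with $(u,v)$ containing only trigger points; then $y_\ell(\cdot)$ and $y_r(\cdot)$ take the same respective values, call them $y_\ell$ and $y_r$, at every point of $[u,v]$, and $y_\ell<y_r$. Reading off the decision tree in Case~5: sub-case (e) makes every point of $[u,v]$ mimic $y_r$; sub-case (f) makes every point mimic $y_\ell$; and in sub-cases (g)/(h), points strictly below $m$ mimic $y_\ell$ while points at or above $m$ mimic $y_r$. In the first two situations $p(u)=p(v)$, and in the third $p$ is constant on $[u,v]$ away from $m$ and jumps from $p(y_\ell)$ to $p(y_r)$ as $x$ crosses $m$; since $y_\ell<y_r$ are non-trigger points (or the endpoints, with values $0$ and $1$), \Cref{lemma:mono_non_trigger} gives $p(y_\ell)\le p(y_r)$, so the jump is upward and in particular $p(u)\le p(v)$. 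Having covered all cases, $p$ is non-decreasing on $(s_{\rho(i)},s_{\rho(i+1)})$, and \Cref{lemma:neighbor_monotone} then yields monotonicity of Modified Doubled Harmonic. I expect the main obstacle to be precisely the bookkeeping at the trigger/non-trigger boundary --- making rigorous the statement that $p$ on a trigger point agrees with $p$ at the adjacent point $y_\ell$ or $y_r$ --- and verifying that the comparisons against $\tfrac12$ built into Case~5 are exactly what keeps the constant trigger-values sandwiched between the surrounding non-trigger values and prevents a decrease at $m$.
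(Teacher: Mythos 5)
Your proof is correct and follows essentially the same approach as the paper: reduce to interval-wise monotonicity via \Cref{lemma:neighbor_monotone}, use \Cref{lemma:mono_non_trigger} on non-trigger points and \Cref{lemma:constant_trigger} on runs of trigger points, and observe that Case~5 of \Cref{def:mdh} makes each triggering request mimic a nearby non-trigger point (or endpoint), so its assignment probability is sandwiched between those of its non-trigger neighbors. You simply carry out the gluing casework more explicitly than the paper's terse proof does.
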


\begin{proof}
The non-trigger points in $\left(s_{\rho(i)}, s_{\rho(i+1)}\right)$, along with $m$, partition the interval into subintervals for which $\Pr\left[r_t \xrightarrow{\text{MDH}} s_{\rho(i+1)} \ | \ r_t = x\right]$ is constant via \Cref{lemma:constant_trigger}. Further, \Cref{lemma:mono_non_trigger} shows that $\Pr\left[r_t \xrightarrow{\text{MDH}} s_{\rho(i+1)} \ | \ r_t = x\right]$ is non-decreasing across non-trigger points, and Case 5 of \Cref{def:mdh} ensures that the probability of assigning a triggering request to $s_{\rho(i+1)}$ is sandwiched between the probability of assigning its neighboring non-trigger points to $s_{\rho(i+1)}$. So, \Cref{lemma:neighbor_monotone} implies that MDH is monotone. 
\end{proof}

\section{Cost Analysis}\label{subsec:cost_analysis}

In this section we prove Theorem \ref{thm:mdh-competitive}, which states 
that Modified Doubled Harmonic is $O(\log n)$-competitive. 

\begin{theorem} \label{thm:mdh-competitive} 
MDH is $O(\log n)$-competitive for online matching on the line.
\end{theorem}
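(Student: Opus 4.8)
The plan is to bound the cost of Modified Doubled Harmonic (MDH) against the cost of Doubled Harmonic (DH), and then invoke the known fact that DH is $O(\log n)$-competitive. Since MDH maintains the same invariants as DH (the estimate $Z$, the imaginary server set $S_\iota$, the real server set $S_\rho$, and an optimal matching $M$), I would couple the two algorithms so that they make the same random choices of ``left'' versus ``right'' on each non-triggering request. The key structural observation is that MDH's expected cost on each request is within a constant factor of DH's expected cost on the corresponding request, so summing over all requests gives $\mathrm{cost}(\mathrm{MDH}) = O(1)\cdot \mathrm{cost}(\mathrm{DH}) = O(\log n)\cdot \OPT$.

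First I would handle the non-triggering requests, which are the bulk of the cost. For such a request $r_t$ landing in a stationary island, DH performs an imaginary move to $s_{\iota(h)}$ or $s_{\iota(h+1)}$ followed by a corrective move along $M$ to the matched real server; MDH instead walks in the same direction and stops at the first available server. I would argue that MDH's total displacement is at most DH's (imaginary plus corrective) displacement, because the first available server in a given direction is no farther than any imaginary server's $M$-partner in that direction once one accounts for how left/right islands are laid out — this is exactly where the island-ordering fact from \cite{optimal-matching-raghvendra} (cited in \Cref{lemma:mono_non_trigger}) does the work. For requests in a left or right island, MDH goes directly to the nearest available server in the forced direction, and I would bound this by the corresponding DH cost (imaginary $+$ corrective) using the same island geometry, together with the pseudo-distance bookkeeping that replaces true distances by $Z/n^2$ or $\infty$ as in the \texttt{pd} definition. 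One must also check that after a request is served, the induced change to $S_\iota$, $S_\rho$, and $M$ is consistent between the coupled runs up to a relabeling, so that the per-request comparison can be iterated.

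Next I would handle triggering requests. By the invariant on $Z$ (optimal cost to date lies in $[Z/10, Z)$) and the fact that $Z$ increases geometrically by factors of $10$, the number of triggering requests over the whole instance is $O(\log(\text{final }\OPT / \text{initial }\OPT))$, which can be folded into the $O(\log n)$ bound since the per-trigger overhead is at most the current $Z$, itself $O(\OPT)$ at that time. For the assignment of a triggering $r_t$, MDH mimics a request at a nearby non-trigger point $y_\ell$ or $y_r$ (Case 5 of \Cref{def:mdh}); I would bound $\mathrm{dist}(r_t, s_{\rho(\cdot)})$ by $\mathrm{dist}(y_\ell \text{ or } y_r, s_{\rho(\cdot)}) + |r_t - y_{\ell/r}|$, and note $|r_t - y_{\ell/r}|$ is at most the length of the enclosing interval $(s_{\rho(h)}, s_{\rho(h+1)})$, whose total contribution over all triggers is absorbed by the geometric-growth count. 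The adjustment operation called after a trigger resynchronizes $S_\iota$ and $M$ with a fresh simulation, so I would verify this leaves MDH in a state that again matches DH's invariants, allowing the analysis to restart cleanly after each double step.

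The main obstacle I expect is the per-request cost comparison in the island cases: showing rigorously that replacing DH's fixed-matching corrective move by MDH's ``stop at the first available server'' move never increases the cost by more than a constant factor, while simultaneously keeping the coupled configurations in sync so the argument telescopes. The subtlety is that MDH's correction is explicitly \emph{not} realizable by any fixed matching (as the paper notes), so one cannot simply compare matchings edge-by-edge; instead I would need a potential-function or amortized argument tracking the discrepancy between the imaginary configuration $S_\iota$ and the real configuration $S_\rho$, showing this discrepancy is controlled by the same quantity that bounds DH's corrective cost. A secondary obstacle is managing the pseudo-distance distortions near the $Z/n^2$ and $Z$ thresholds so that the artificial inflation/truncation contributes only an $O(1)$ or $O(\log n)$ factor, which is precisely the trade-off that produces the $n^2$ inside the $\log$ and hence the final $O(\log n)$.
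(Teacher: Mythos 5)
Your high-level plan matches the paper's in spirit: compare MDH to DH, use the geometric growth of $Z$ to organize the instance into phases, and handle the corrective step with an amortized argument. But there are two genuine gaps that your proposal does not resolve.

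First, the claim that ``MDH's expected cost on each request is within a constant factor of DH's expected cost on the corresponding request'' is false as a per-request statement, and you cannot telescope it. After even one request, MDH and DH have different sets of available servers $S_\rho$, and the nearest available server in MDH's chosen direction can be arbitrarily farther than the $M$-partner of the imaginary server DH would use. You do recognize this toward the end and suggest a potential-function argument tracking the discrepancy between $S_\iota$ and $S_\rho$, which is exactly the paper's route: Lemma~\ref{lemma:ec_invariant} shows that
$D\bigl(S_\rho(t), S_\iota(t)\bigr) + \sum_{j<t} \bigl(d(r_j, s_{\sigma(j)}) - d(r_j, s_{\gamma(j)})\bigr)$
is non-increasing within a phase, so MDH's excess over DH's imaginary cost is paid for by a decrease in the optimal matching cost $D(S_\iota,S_\rho)$. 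Proving this requires the explicit one-step comparison of Lemma~\ref{lemma:DPQ} on how $D(P,Q)$ changes when one point is deleted from each side, together with the island-ordering facts to pin down the relative positions of $r_t$, $s_{\sigma(t)}$, and $s_{\gamma(t)}$ — this is the core of the argument and your proposal names the obstacle but does not supply the resolution. It also requires Lemma~\ref{lemma:costMi} to bound the potential at the start of each phase by the edges placed in earlier phases, which you do not mention.

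Second, your treatment of triggering requests is not tight enough. You bound $d(r_t, s_{\rho(\cdot)})$ by $d(y_{\ell/r}, s_{\rho(\cdot)}) + |r_t - y_{\ell/r}|$ and then bound $|r_t - y_{\ell/r}|$ by the length of the enclosing interval $(s_{\rho(h)}, s_{\rho(h+1)})$. That interval can be far longer than $Z$ (it is not controlled by the current estimate), so absorbing this by a ``geometric-growth count'' of triggers does not work. The paper instead proves Lemma~\ref{lemma:ei_crucial}: the expected trigger cost is at most $2\max\bigl(\mathbb{E}[d(y_\ell,s_\ell)], \mathbb{E}[d(y_r,s_r)], d(r_t,s_j)\bigr)$ where $s_j$ is the \emph{nearest} available server, via the normalized-cost calculations in Lemma~\ref{lemma:N_facts}; the greedy term $d(r_t,s_j)$ is then bounded separately (Lemma~\ref{lemma:greedy_edge}) by a covering argument rather than by the interval length. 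Without these two pieces, your trigger bound does not close.
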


We first break the execution of Modified Doubled Harmonic into phases, where each
phase terminates with a triggering request. 
We show in Lemma \ref{lemma:ec_invariant} that the aggregate cost of the nontriggering requests during a phase is at most $O(\log n)$ times 
the current estimate of the optimal cost plus the imaginary cost that Doubled Harmonic would have incurred during that phase. We accomplish this by showing that for each nontriggering request, the cost of the optimal matching between the imaginary and available servers decreases by at least the amount that
the cost for Modified Doubled Harmonic exceeds the imaginary cost that Doubled Harmonic would
have incurred on that request. 
In \Cref{lemma:ei_crucial} we bound the cost to Modified Doubled Harmonic for
a triggering request
by twice the greedy cost (which can be seen to be $O(\log n)$ times OPT via the traingle inequality) and the
cost to Modified Doubled Harmonic if the request had arrived at a nearby non-trigger point. Once we have established Lemma
\ref{lemma:ec_invariant}  and Lemma \ref{lemma:ei_crucial}, the bounding of
Modified Doubled Harmonic's cost proceeds as in \cite{harmonic-alg}.

\subsection{Cost Analysis Definitions}
We first need some definitions. Let $S_\iota(t)$ be the set of imaginary servers before the arrival of $r_t$, and let $S_\rho(t)$ be the set of available servers before the arrival of $r_t$. Let $D\left(S_\iota(t), S_\rho(t)\right)$ be the optimal cost of matching $S_\iota(t)$ and $S_\rho(t)$. Let $s_{\sigma(t)}$ be the available server that Modified Doubled Harmonic used
for request $r_t$. For a nontriggering request $r_t$, if $r_t$ appeared in a left island or a right island, let $s_{\gamma(t)}$ be the imaginary server that would be selected if one selected a neighboring imaginary server to either the left or right of $r_t$ with probability inversely proportional to the pseudo-distance. If instead $r_t$ appeared in a stationary island, then if one moves from $r_t$ in the direction of $s_{\sigma(t)}$, let $s_{\gamma(t)}$ be the first imaginary server one hits. Define a phase as the sequence of requests which appear while MDH has the same estimate $Z$ on the optimal cost. Phases begin with a sequence of nontriggering requests, and terminate with a single triggering request, after which the estimate $Z$ inflates. 

Let OPT$(t)$ be the optimal cost of matching the first $t$ requests to the servers, and suppose that OPT$(n) \in \left[10^{\ell}, 10^{\ell+1}\right)$. For ease of presentation, suppose that before the estimate $Z$ is instantiated during execution of MDH, it holds a default value of $1$. Then the estimate $Z$ runs through $Z = 10^{k_i}$ for $0 = k_0 < k_1 < k_2 \dots < k_m = \ell+1$. Let $Z_i = 10^{k_i}$ for each $0 \leq i \leq m$. We now introduce some definitions which allow us to partition the requests according to $Z_i$. Let

\begin{itemize}
    \item $\tau_i$ be the \textbf{maximum index} $t$ such that OPT$(t) < Z_i$ for each $0 \leq i \leq m$.
    \item $\rho_i$ be the $i$'th \textbf{triggering request}, which upon appearance causes OPT$(t)$ to increase from $ < Z_{i-1}$ to $ \geq Z_{i-1}$. Equivalently $\rho_i = r_{\tau_{i-1} + 1}$.
    \item $B_i$ be the \textbf{sequence} of requests $r_t$ arriving after $\rho_i$ and before $\rho_{i+1}$.
\end{itemize}

Let $B_0$ and $B_m$ be the sequence of requests appearing before $\rho_1$ and after $\rho_m$, respectively. This allows us to decompose the full request sequence as $B_0, \rho_1, B_1, \rho_2, \dots, B_{m-1}, \rho_m, B_m$. The phase of the algorithm associated with $Z_i$ is given by the pair $\left(B_i, \rho_{i+1}\right)$. We now introduce some definitions which allow us to partition MDH's assignments and DH's underlying imaginary moves according to $Z_i$. Let

\begin{itemize}
    \item $W_i = \bigcup_{r_t \in B_i} \{\left(r_t, s_{\sigma(t)}\right)\}$ be the set of \textbf{assigned edges} for the requests in $B_i$.
    \item $X_i = \bigcup_{r_t \in B_i} \{\left(r_t, s_{\gamma(t)}\right)\}$ conceptually be the set of chosen \textbf{imaginary moves} for the requests in $B_i$.
\end{itemize}

Conceptually, $X_i$ is a set of possible imaginary moves of Doubled Harmonic. These imaginary moves are relevant for us because we bound the cost of Modified Doubled Harmonic's assignments against the cost of these imaginary moves. We are also interested in how MDH / DH simulates request assignments during an adjustment operation. For this reason, define $s_{\mu(i, t)}$ to be the imaginary server chosen for the request $r_t$ during the adjustment operation triggered by $\rho_i$. Of course, $s_{\mu(i, t)}$ is only defined for $t \leq \tau_{i-1}$, because the adjustment operation which occurs after the estimate inflates to $Z = Z_i$ only simulates request assignments up to the triggering request $\rho_i = r_{\tau_{i-1}+1}$. Now, let

\begin{itemize}
    \item $Y_i = \bigcup_{t=1}^{\tau_{i-1}} \{\left(r_t, s_{\mu(i, t)}\right)\}$ be the set of \textbf{simulated assignments} of the requests for the adjustment operation triggered by $\rho_i$.
    \item $e_i = \{\left(r_{t'}, s_{\sigma(t')}\right)\}$ be the \textbf{assigned edge} of $\rho_i$. Here $t' = \tau_{i-1} + 1$.
    \item $f_i = \{\left(r_{t'}, s_{\gamma(t')}\right)\}$ conceptually be the chosen \textbf{imaginary move} for $\rho_i$. Here $t' = \tau_{i-1} + 1$, and $s_{\gamma(t')}$ is a neighboring imaginary server to $\rho_i$ chosen with probability inversely proportional to the pseudodistance \textit{after} the adjustment operation triggered by $\rho_i$ is performed.
    \item $E_i= \{e_1, e_2, \dots, e_i\}$ be the set of \textbf{assigned edges} for the triggering requests up through $\rho_i$.
\end{itemize}

This allows us to decompose the full assigned edge set $W = \left(\bigcup_{i=0}^m W_i\right) \cup E_m$ in the order $W = W_0, e_1, W_1, e_2, \dots, W_{m-1}, e_m, W_m$. We can further decompose the chosen imaginary moves in the order $X = X_0, f_1, X_1, f_2, \dots, X_{m-1}, f_m, X_m$. Lastly, for an edge set $U$, let $|U|$ be the sum of the lengths of the edges in $U$.

\subsection{Bounding Non-Trigger Costs}

Before proving \Cref{lemma:ec_invariant}, we establish a useful fact about how the cost of the optimal matching between the available and imaginary servers will change during the nontriggering requests of a phase (i.e. during the requests of a $B_i$).

\begin{lemma}\label{lemma:DPQ}
Let $P, Q$ be two finite sets of points in $\mathbb{R}^1$ with the same number of elements. Suppose $P = \{p_1, p_2, \dots, p_m\}$ and $Q = \{q_1, q_2, \dots, q_m\}$, where the points have been written in increasing order of location. Let $D(P, Q)$ be the optimal cost of matching $P$ and $Q$. Further, let $P' = P \setminus \{p_g\}$ and $Q' = Q \setminus \{q_h\}$ for arbitrary $g, h \in [1, m]$. Then

\[   D\left(P', Q'\right) - D(P, Q) \leq \left\{
\begin{array}{ll}
      \left(p_h - p_g\right) - |p_h - q_h| & \ \ g \leq h \\
      \left(q_g - q_h\right) - |q_g - p_g| & \ \ g > h \\
\end{array} 
\right. \]
\end{lemma}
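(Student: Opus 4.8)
The plan is to analyze the optimal matching on the line by exploiting its combinatorial structure: an optimal matching between two sets of $m$ points on the line, sorted in increasing order, is the "sorted" matching $p_j \leftrightarrow q_j$ (this is the standard rearrangement/exchange-argument fact). So $D(P,Q) = \sum_{j=1}^m |p_j - q_j|$, and similarly $D(P',Q')$ is the sum of absolute differences of $P'$ and $Q'$ each taken in sorted order. First I would set up notation for how the sorted orders of $P'$ and $Q'$ differ from those of $P$ and $Q$: removing $p_g$ from $P$ shifts the indices of $p_{g+1},\dots,p_m$ down by one, and removing $q_h$ from $Q$ does the same to $q_{h+1},\dots,q_m$. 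I would handle the case $g \le h$ in detail; the case $g > h$ is symmetric by swapping the roles of $P$ and $Q$ (which also swaps the two bound expressions, matching the statement).

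In the case $g \le h$, the sorted matching on $P', Q'$ pairs $p_j \leftrightarrow q_j$ for $j < g$ and for $j > h$ (these pairs are unchanged from the $(P,Q)$ matching), and pairs $p_{j+1} \leftrightarrow q_j$ for $g \le j \le h-1$ (these are the "shifted" pairs in the index window between the two deletions). Hence
\[
D(P',Q') \;\le\; \sum_{j<g} |p_j - q_j| \;+\; \sum_{j=g}^{h-1} |p_{j+1} - q_j| \;+\; \sum_{j>h} |p_j - q_j|,
\]
where I only need $\le$ since $D(P',Q')$ is the optimum. Subtracting $D(P,Q) = \sum_j |p_j - q_j|$ cancels the $j<g$ and $j>h$ terms and the term $|p_h - q_h|$ from the second sum, leaving
\[
D(P',Q') - D(P,Q) \;\le\; \sum_{j=g}^{h-1} \bigl(|p_{j+1} - q_j| - |p_j - q_j|\bigr) \;-\; |p_h - q_h|.
\]
Now I would bound each telescoped difference by the triangle inequality: $|p_{j+1}-q_j| - |p_j - q_j| \le |p_{j+1}-p_j| = p_{j+1}-p_j$ (using that $P$ is sorted, so $p_{j+1}\ge p_j$). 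Summing over $j = g,\dots,h-1$ telescopes to $p_h - p_g$, giving exactly $D(P',Q') - D(P,Q) \le (p_h - p_g) - |p_h - q_h|$, the claimed bound.

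The main thing to get right — the only real subtlety — is the bookkeeping of which points pair with which after the two deletions, i.e. verifying that the sorted matching on $(P',Q')$ really is "identity outside the window $[g,h]$, shifted by one inside it." This is where an off-by-one error would creep in, and it is worth stating the three index ranges explicitly as above. Everything else (optimality of the sorted matching, the triangle-inequality step, the telescoping) is routine. One should also note the degenerate subcase $g = h$, where the middle sum is empty: then the bound reads $D(P',Q') - D(P,Q) \le -|p_h - q_h| \le 0$, which is immediate and consistent. I expect no difficulty beyond this indexing care.
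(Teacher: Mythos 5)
Your proposal is correct and follows essentially the same route as the paper: use the sorted-matching formula $D(P,Q)=\sum_j|p_j-q_j|$, pair $p_{j+1}\leftrightarrow q_j$ in the shifted window $g\le j\le h-1$ and $p_j\leftrightarrow q_j$ outside it, apply the triangle inequality termwise, and telescope. The only cosmetic difference is that you bound $D(P',Q')$ from above by the cost of this specific matching rather than asserting equality, which is a slightly more conservative (and equally valid) phrasing of the same step.
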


\begin{proof}
We know via \cite{optimal-matching-raghvendra} that

\[ D(P, Q) = \sum_{k=1}^m |p_k - q_k| \]

Suppose $g \leq h$. Then we have

\[ D\left(P', Q'\right) = \sum_{k=1}^{g-1} |p_k - q_k| + \sum_{k=g}^{h-1} |p_{k+1} - q_k| + \sum_{k=h+1}^m |p_k - q_k| \]

Thus

\begin{align*}
D\left(P', Q'\right) - D(P, Q) &= \left(\sum_{k=g}^{h-1} |p_{k+1} - q_k| - |p_k - q_k|\right) - |p_h - q_h| \\
&\leq \left(\sum_{k=g}^{h-1} |p_{k+1} - p_k|\right) - |p_h - q_h| \\
&= \left(\sum_{k=g}^{h-1} \left(p_{k+1} - p_k\right)\right) - |p_h - q_h| \\
&= \left(p_h - p_g\right) - |p_h - q_h|
\end{align*}

For $g > h$, the proof follows identically, only with $P, Q$ and $g, h$ switched.
\end{proof}

\begin{lemma}\label{lemma:ec_invariant} 
Consider an arbitrary phase, and renumber the nontriggering requests $B_i$ in that phase to $r_1, r_2, \dots, r_k$. With probability one the expression
\[ D(S_\rho(t), S_\iota(t)) + \sum_{j=1}^{t-1} \left(d(r_j, s_{\sigma(j)}) - d(r_j, s_{\gamma(j)})\right) \] is a non-increasing  function of $t$.
\end{lemma}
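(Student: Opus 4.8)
The plan is to establish, for every nontriggering request $r_t$ in the phase, the per-step inequality
\[
\Delta D \ :=\ D\bigl(S_\rho(t+1),S_\iota(t+1)\bigr)-D\bigl(S_\rho(t),S_\iota(t)\bigr)\ \le\ d\bigl(r_t,s_{\gamma(t)}\bigr)-d\bigl(r_t,s_{\sigma(t)}\bigr) .
\]
Indeed, when $r_t$ is processed the first summand of the displayed expression changes by $\Delta D$ and the sum changes by $d(r_t,s_{\sigma(t)})-d(r_t,s_{\gamma(t)})$, so this inequality is exactly the statement that the expression does not increase at step $t$, and the lemma follows. To set it up, note that processing a nontriggering $r_t$ removes $s_{\sigma(t)}$ from the available set and removes $s_{\gamma(t)}$ from the imaginary set (Modified Doubled Harmonic maintains Doubled Harmonic's invariants by performing the same conceptual imaginary move, namely to the neighboring imaginary server $s_{\gamma(t)}$), so $S_\rho(t+1)=S_\rho(t)\setminus\{s_{\sigma(t)}\}$ and $S_\iota(t+1)=S_\iota(t)\setminus\{s_{\gamma(t)}\}$. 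Hence $\Delta D$ is exactly the quantity bounded by \Cref{lemma:DPQ}, applied with $P=S_\iota(t)$, $Q=S_\rho(t)$, $p_g=s_{\gamma(t)}$, and $q_h=s_{\sigma(t)}$, where $g$ is the rank of $s_{\gamma(t)}$ among imaginary servers and $h$ the rank of $s_{\sigma(t)}$ among available servers.

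The next step is to determine which branch of \Cref{lemma:DPQ} applies. For this I would use the following consequence of the matching structure of \cite{optimal-matching-raghvendra}: a location $x$ that is not at a server lies in a left island, a stationary island, or a right island exactly when $\#\{\text{imaginary servers} \le x\}$ is, respectively, less than, equal to, or greater than $\#\{\text{available servers} \le x\}$. Now $s_{\sigma(t)}$ is $s_{\rho(i)}$ when the algorithm moves $r_t$ to the left and $s_{\rho(i+1)}$ when it moves to the right (so $h = \#\{\text{available} \le r_t\}$ or $\#\{\text{available} \le r_t\}+1$), while $s_{\gamma(t)}$ is a neighboring imaginary server of $r_t$ --- on the side toward which the algorithm moves when $r_t$ is in a stationary island, and on either side when $r_t$ is in a left or right island (so $g = \#\{\text{imaginary} \le r_t\}$ or $\#\{\text{imaginary} \le r_t\}+1$). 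Combining these facts shows that $g = h$ in every stationary-island subcase, $g \le h$ in every left-island subcase, and $g \ge h$ in every right-island subcase.

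It then remains to simplify the branch bound to the target inequality. When $g \le h$, \Cref{lemma:DPQ} bounds $\Delta D$ by $(p_h - p_g) - |p_h - q_h|$, where $p_h$ is the imaginary server matched to $q_h = s_{\sigma(t)}$ in $M$; the index characterization forces $p_h$ to lie on the opposite side of $r_t$ from $s_{\sigma(t)}$, so $|p_h - q_h|$ can be written without the absolute value, and after cancellation the bound reduces to a signed expression in $s_{\gamma(t)}, s_{\sigma(t)}, r_t$ for which the desired inequality amounts to the fact that $s_{\gamma(t)}$ is a neighboring imaginary server on the correct side of $r_t$; for instance, if $r_t$ is in a left island with $s_{\gamma(t)}$ its right imaginary neighbor, one obtains $\Delta D \le s_{\sigma(t)} - s_{\gamma(t)}$, which is at most $d(r_t,s_{\gamma(t)}) - d(r_t,s_{\sigma(t)})$ precisely because $r_t \le s_{\gamma(t)}$. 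The case $g > h$ is symmetric, with the two point sets exchanged and using the available server $q_g$ matched to $p_g = s_{\gamma(t)}$ in $M$; and the case $g = h$ --- which covers all stationary-island subcases --- is immediate, since there $\Delta D = -|s_{\gamma(t)} - s_{\sigma(t)}|$ while $s_{\gamma(t)}$ and $s_{\sigma(t)}$ lie on the same side of $r_t$.

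The main obstacle is this last step: carrying out the geometric bookkeeping that, in each of the left, stationary, and right island cases and for each admissible choice of the neighboring imaginary server $s_{\gamma(t)}$, locates the $M$-partner of the deleted server relative to $r_t$, so that the absolute value in \Cref{lemma:DPQ}'s bound collapses and telescopes against $d(r_t,s_{\gamma(t)}) - d(r_t,s_{\sigma(t)})$. A secondary and more routine point is to confirm that Modified Doubled Harmonic's invariant maintenance indeed deletes $s_{\gamma(t)}$ from $S_\iota$ (so that \Cref{lemma:DPQ} applies with the stated sets), and to dispose of degenerate arrival positions --- such as $r_t$ coinciding with an imaginary server, for which $d(r_t,s_{\gamma(t)}) = 0$.
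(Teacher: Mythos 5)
Your proposal is correct and follows essentially the same route as the paper's proof: reduce to the per-step inequality $\Delta D \le d(r_t,s_{\gamma(t)}) - d(r_t,s_{\sigma(t)})$, apply \Cref{lemma:DPQ}, and use the left/stationary/right island structure of the optimal matching $M$ to determine which branch of \Cref{lemma:DPQ} applies and to pin down the sign of the relevant differences. The only differences are cosmetic: you swap the roles of $P$ and $Q$ in \Cref{lemma:DPQ}, and in the final step you cancel the $M$-partner term $p_h$ (resp. $q_g$) directly to reduce to a sign check, whereas the paper keeps that term and closes with the triangle inequality $d(s_{\rho(b)},r_t) \le d(s_{\rho(b)},s_{\iota(b)}) + d(r_t,s_{\iota(b)})$.
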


\begin{proof}
Define $g(t)$ to be the above expression for the chosen phase, and let $t \in [1, k]$ be arbitrary. Then we have
\[ g(t+1) - g(t) = D(S_\rho(t+1), S_\iota(t+1)) - D(S_\rho(t), S_\iota(t)) + d(r_t, s_{\sigma(t)}) - d(r_t, s_{\gamma(t)}) \]

where $S_\rho(t+1) = S_
\rho(t) \setminus \{s_{\sigma(t)}\}$ and $S_\iota(t+1) = S_\iota(t) \setminus \{s_{\gamma(t)}\}$. Write $S_\rho(t) = \{s_{\rho(1)}, s_{\rho(2)}, \dots, s_{\rho(\ell)} \}$ and $S_\iota(t) = \{s_{\iota(1)}, s_{\iota(2)}, \dots, s_{\iota(\ell)} \}$ where the servers in each set have been ordered left-to-right. Suppose $s_{\sigma(t)} = s_{\rho(a)}$ and $s_{\gamma(t)} = s_{\iota(b)}$.

Now, suppose $a < b$. Then because $s_{\rho(a)} < s_{\rho(b)}$ and MDH is a neighbor algorithm, we must have $r_t < s_{\rho(b)}$. Further, because $s_{\iota(a)} < s_{\iota(b)}$ and $s_{\iota(b)}$ is a neighboring imaginary server to $r_t$, we must have $r_t > s_{\iota(a)}$. The final observation is the trickiest to notice: $r_t \leq s_{\rho(a)}$, meaning that $a < b$ implies MDH \textit{cannot} assign $r_t$ leftwards. We can show this through simple casework on the description of Modified Doubled Harmonic. The only cases where leftward assignment is possible are Case 3, Case 4a, and Case 4c. However, in all of these cases, we must have $a \geq b$. Indeed, in Case 3, $a = \ell \geq b$. In Case 4a, $r_t$ is in a left island, and so \cite{optimal-matching-raghvendra} shows $r_t$ must have more available servers than imaginary servers on its left, forcing $a \geq b$. In Case 4c, $r_t$ is in a stationary island, and so \cite{optimal-matching-raghvendra} shows there must be an equal number of available and imaginary servers to the left of (and including the location of) $r_t$. By definition of $s_{\gamma(t)}$ when $r_t$ is in a stationary island, we must have $a = b$. Thus given $a < b$, leftward assignment of $r_t$ is not possible, and so $r_t \leq s_{\rho(a)}$. Finally, we can deduce $s_{\iota(a)} < r_t \leq s_{\rho(a)} < s_{\rho(b)}$. Simple computation yields

\begin{align*}
g(t+1) - g(t) &= D(S_{\rho}(t+1), S_{\iota}(t+1)) - D(S_{\rho}(t), S_{\iota}(t)) + d(r_t, s_{\rho(a)}) - d(r_t, s_{\iota(b)}) \\
&\leq \left(s_{\rho(b)} - s_{\rho(a)}\right) - d(s_{\rho(b)}, s_{\iota(b)}) + \left(s_{\rho(a)} - r_t\right) - d(r_t, s_{\iota(b)}) \\
&= \left(s_{\rho(b)} - r_t\right) - \left(d(s_{\rho(b)}, s_{\iota(b)}) + d(r_t, s_{\iota(b)})\right) \\
&= d(s_{\rho(b)}, r_t) - \left(d(s_{\rho(b)}, s_{\iota(b)}) + d(r_t, s_{\iota(b)})\right) \\
&\leq 0
\end{align*}

The first inequality follows from \Cref{lemma:DPQ}. If $a = b$, direct computation gives the same result. If $a > b$, applying the same reasoning as before gives $s_{\rho(b)} < s_{\rho(a)} \leq r_t < s_{\iota(a)}$, and the same result follows. Thus in all cases, $g(t+1) - g(t) \leq 0$ giving $g(t+1) \leq g(t)$. Thus $g(t)$ is a non-increasing function of $t$, completing the proof.
\end{proof}

Let $i \in [0, m]$ be arbitrary. We now pursue the goal of bounding $|W_i|$, the total cost of the non-trigger assignments while Modified Doubled Harmonic has estimate $Z = Z_i$. We start by recalling an important result from \cite{harmonic-alg}, which gives a cost bound on the imaginary moves and the simulated assignments from the adjustment operation.

\begin{lemma}\label{lemma:EV_ideal}
\cite{harmonic-alg} $\mathbb{E}\left[|X_i| + |f_i| + |Y_i|\right] \leq C \cdot Z_i$ for $C = O(\log n)$.
\end{lemma}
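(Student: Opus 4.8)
# Proof Proposal for Lemma \ref{lemma:EV_ideal}

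The plan is to reduce this to the corresponding statement in \cite{harmonic-alg}, since this lemma is explicitly attributed to that paper. The quantities $X_i$, $f_i$, and $Y_i$ are defined precisely so that they capture the \emph{imaginary} movement of the original Doubled Harmonic algorithm together with the re-simulation performed during an adjustment operation; none of them refers to the corrective moves that MDH performs differently. So the first step is to observe that the imaginary moves $X_i \cup \{f_i\}$ for the requests of $B_i$ (and the triggering request $\rho_{i+1}$), together with the simulated assignments $Y_{i+1}$ produced by the adjustment operation that fires at the end of the phase, are \emph{exactly} the imaginary/simulated moves that Doubled Harmonic itself would make during the phase associated with estimate $Z_i$. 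This is immediate from the definitions in \Cref{def:mdh} and \Cref{def:adjustment-op}: MDH's imaginary movement in Case 4 is identical to DH's (it moves to the left or right neighboring imaginary server with the same pseudo-distance-proportional probabilities), and the adjustment operation is literally the same subroutine.

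Next I would invoke the analysis of Doubled Harmonic from \cite{harmonic-alg}. The key structural fact there is that when DH runs with a fixed estimate $Z = Z_i$, the expected total length of its imaginary moves during that phase — i.e., the cost of the "$Z$-inflated Harmonic" run on the requests of the phase against the imaginary server set — is $O(\log n) \cdot Z_i$. This bound comes from the standard Harmonic analysis: because all pseudo-distances between consecutive servers lie in $[Z_i/n^2, Z_i]$ (anything larger is $\infty$ and irrelevant within a connected block, anything smaller is inflated to $Z_i/n^2$), the pseudo-diameter of any connected block of servers is at most $n \cdot Z_i$ while any nonzero pseudo-distance is at least $Z_i / n^2$; the harmonic-probability argument then charges the expected movement cost of each request to $O(\log(\text{pseudo-diameter}/\text{min pseudo-distance})) = O(\log n)$ times the optimal (pseudo-)matching cost, which during this phase is at most $Z_i$ by the invariant on $Z$. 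The re-simulation cost $|Y_{i+1}|$ is bounded the same way, since it is just another execution of the same imaginary process on a prefix of the requests with the estimate $Z_i$, whose optimal cost is still below $Z_i$.

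The main obstacle — and the reason this lemma is merely quoted rather than reproven — is that making the above rigorous requires reproducing the guess-and-double accounting of \cite{harmonic-alg} in full: one must verify that the optimal matching cost \emph{in the pseudo-metric} between the requests of the phase and the relevant imaginary server set is $O(Z_i)$ (not just the true optimal cost), handle the inflation of small inter-server distances carefully so the $\log n$ rather than $\log \Delta$ bound is obtained, and confirm that the adjustment operation's re-simulation does not compound costs across phases. Since all of this is carried out verbatim in \cite{harmonic-alg} for the imaginary component of the algorithm — and MDH's imaginary component is unchanged — the cleanest route is to state that $X_i$, $f_i$, and $Y_i$ are precisely DH's imaginary and simulated moves and cite the corresponding bound there. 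I would therefore present the proof as: (i) identify these edge sets with DH's imaginary process under estimate $Z_i$; (ii) cite the $O(\log n) \cdot Z_i$ bound on that process from \cite{harmonic-alg}; (iii) conclude by linearity of expectation. The genuinely new work of the paper lies entirely in bounding $|W_i|$ (the corrective/assignment cost) against these imaginary costs via \Cref{lemma:ec_invariant}, not in re-deriving this quoted bound.
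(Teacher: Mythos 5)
The paper gives no proof of this lemma; it is quoted directly from \cite{harmonic-alg}, exactly as you propose. Your overall plan --- identify $X_i, f_i, Y_i$ with the imaginary/simulated moves of Doubled Harmonic run under estimate $Z_i$, cite the $O(\log n)\cdot Z_i$ bound from \cite{harmonic-alg}, and apply linearity of expectation --- is the right reading of how the lemma is meant to be invoked, and your sketch of why the underlying bound holds (pseudo-distances confined to $[Z_i/n^2, Z_i]$ so the harmonic argument gives $\log n$ rather than $\log\Delta$) accurately reflects the reference.

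One indexing slip worth fixing: you describe the edge set as ``$X_i \cup \{f_i\}$ for the requests of $B_i$ and the triggering request $\rho_{i+1}$, together with $Y_{i+1}$ produced by the adjustment at the end of the phase,'' but by the paper's definitions $f_i$ is the imaginary move chosen for the trigger $\rho_i$ (\emph{after} the adjustment that sets $Z=Z_i$), and $Y_i$ is the set of simulated assignments from that same adjustment operation triggered by $\rho_i$. In other words, all three of $Y_i$, $f_i$, $X_i$ are produced while the estimate equals $Z_i$: the phase associated with $Z_i$ opens with the adjustment $Y_i$, then the imaginary move $f_i$ for $\rho_i$, then the imaginary moves $X_i$ for $B_i$, and closes when $\rho_{i+1}$ triggers $Y_{i+1}$ (which belongs to the next phase and is bounded by $C\cdot Z_{i+1}$, not $C\cdot Z_i$). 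This is exactly why the bound on $|X_i|+|f_i|+|Y_i|$ involves $Z_i$ and not a mixture of $Z_i$ and $Z_{i+1}$; with that alignment corrected, your reduction to \cite{harmonic-alg} is as clean as it can be.
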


Moving forward, we will use $C$ to refer to the specific $O(\log n)$ function which is used in \Cref{lemma:EV_ideal}. Because $|X_i|$ is properly bounded by $O(Z_i \log n)$, our goal now becomes bounding $|W_i| - |X_i|$, the amount Modified Doubled Harmonic exceeds the imaginary cost that Doubled Harmonic would have incurred on the requests in $B_i$.

Let $\hat{t_i} = \tau_{i-1} + 2$ be the time of the first request in $B_i$. To bound $|W_i| - |X_i|$, we will bound $D\left(S_\rho\left(\hat{t_i}\right), S_\iota\left(\hat{t_i}\right)\right)$, which will be sufficient for our purposes upon application of \Cref{lemma:ec_invariant}. We do so by constructing a matching $M_i : S_\iota\left(\hat{t_i}\right) \rightarrow S_\rho\left(\hat{t_i}\right)$ whose cost is appropriately bounded.

\begin{lemma}\label{lemma:costMi} \cite{harmonic-alg}
There exists a bijection $M_i : S_\iota\left(\hat{t_i}\right) \rightarrow S_\rho\left(\hat{t_i}\right)$ such that

\[ \text{\emph{cost}}\left(M_i\right) \leq |f_i| + |Y_i| + |E_i| + \sum_{j=0}^{i-1} |W_j| \]
\end{lemma}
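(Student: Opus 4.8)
The plan is to build the bijection $M_i : S_\iota(\hat t_i) \to S_\rho(\hat t_i)$ by composing several partial correspondences, each of which we can charge to one of the four terms on the right-hand side, and then invoke the one-dimensional structure theorem of~\cite{optimal-matching-raghvendra} (the same fact used in \Cref{lemma:DPQ}) to bound the cost of the composed matching by the sum of the costs of its pieces. The key observation is that $S_\iota(\hat t_i)$ and $S_\rho(\hat t_i)$ are both obtained from the full server set $S$ by deleting the servers ``used up'' during earlier phases and the current triggering request, but the two sets use different notions of which servers got used: $S_\rho(\hat t_i) = S \setminus \{s_{\sigma(1)}, \dots, s_{\sigma(\tau_{i-1}+1)}\}$ records the \emph{actual} assignments MDH made, while $S_\iota(\hat t_i)$ is the set of servers the adjustment operation triggered by $\rho_i$ \emph{imagines} are available, namely $S$ minus the simulated assignments $\{s_{\mu(i,t)} : t \le \tau_{i-1}\}$ minus the imaginary server $s_{\gamma(\tau_{i-1}+1)}$ that $\rho_i$ picked after the adjustment.

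The steps I would carry out, in order:
\begin{enumerate}
\item Write both sets explicitly as $S$ minus an appropriate deleted set: $S_\rho(\hat t_i)$ deletes the real assigned servers of $r_1,\dots,r_{\tau_{i-1}+1}$, and $S_\iota(\hat t_i)$ deletes the simulated servers $s_{\mu(i,t)}$ for $t \le \tau_{i-1}$ together with $s_{\gamma(\tau_{i-1}+1)}$.
\item Observe that the real assigned servers for $r_1,\dots,r_{\tau_{i-1}}$ decompose as $E_{i-1} \cup \bigcup_{j=0}^{i-1} W_j$ (the triggering assignments $e_1,\dots,e_{i-1}$ together with the non-triggering blocks), and pair each such real server with the corresponding simulated server $s_{\mu(i,t)}$ assigned to the same request $r_t$; the total displacement of these pairs is $|E_{i-1}| + \sum_{j=0}^{i-1}|W_j| + |Y_i|$ by the triangle inequality applied request-by-request ($d(s_{\sigma(t)}, s_{\mu(i,t)}) \le d(r_t, s_{\sigma(t)}) + d(r_t, s_{\mu(i,t)})$), but more cleanly one charges $Y_i$ separately and notes the $s_{\mu(i,t)}$ are exactly the deleted set of $S_\iota$ up to the last one.
\item Handle the last request $r_{\tau_{i-1}+1} = \rho_i$: its real assigned server contributes $e_i$ (so $|E_i| = |E_{i-1}| + |e_i|$) and its imagined server contributes $f_i$.
\item Assemble: the set of servers deleted from $S$ to form $S_\rho(\hat t_i)$ and the set deleted to form $S_\iota(\hat t_i)$ are in bijective correspondence (request by request), and this correspondence extends by the identity on $S_\iota(\hat t_i) \cap S_\rho(\hat t_i)$ to a bijection $M_i$ on all of $S$ restricted appropriately. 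By the 1-D matching structure theorem, $\mathrm{cost}(M_i) \le \mathrm{cost}$ of the restriction of this correspondence to the deleted servers, which telescopes to at most $|f_i| + |Y_i| + |E_i| + \sum_{j=0}^{i-1}|W_j|$.
\end{enumerate}

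The main obstacle I expect is Step~4: making precise the claim that a bijection built from the identity on the common part plus an arbitrary bijection between the two ``deleted'' multisets has cost bounded by the cost of that bijection between the deleted parts. On the line this is intuitive — moving only the deleted servers to their partners and leaving everything else fixed is a valid matching, and the optimal matching can only be cheaper — but one must be careful that $S_\iota(\hat t_i)$ and $S_\rho(\hat t_i)$ genuinely share all servers outside the two deleted sets, which requires checking that the adjustment operation's simulation and MDH's real run delete servers for the \emph{same} set of request indices $\{1,\dots,\tau_{i-1}+1\}$ (they do, by the definitions of $s_{\mu(i,t)}$ and of the adjustment operation). A secondary subtlety is that the simulated assignments $s_{\mu(i,t)}$ need not coincide with the $s_{\gamma(t)}$ appearing in $W_j$'s companion sets $X_j$; this is precisely why the bound features $|Y_i|$ rather than $\sum |X_j|$, and why we charge the ``real vs.\ simulated'' discrepancy through the triangle inequality at each request. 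Since this lemma is quoted from~\cite{harmonic-alg}, I would present the argument at the level of the decomposition and the structure-theorem appeal, deferring the bookkeeping.
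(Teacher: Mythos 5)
Your overall decomposition is on the right track and matches the spirit of the paper's proof: identify the ``deleted'' sets whose complements are $S_\iota(\hat t_i)$ and $S_\rho(\hat t_i)$, pair request-by-request via the shared request $r_t$, and use the triangle inequality $d(s_{\sigma(t)}, s_{\mu(i,t)}) \le d(r_t, s_{\sigma(t)}) + d(r_t, s_{\mu(i,t)})$ to charge each displacement to the two edge sets $W \cup E$ and $Y_i \cup \{f_i\}$. Where your Step~4 goes wrong is in claiming that this request-by-request correspondence ``extends by the identity on the common part'' to a bijection $M_i$. It does not: the map $\phi(s_{\sigma(t)}) = s_{\mu(i,t)}$ is a bijection between the deleted-from-$S_\rho$ set $A$ and the deleted-from-$S_\iota$ set $B$, but these sets can overlap, and restricting $\phi$ to $A \setminus B$ (the imaginary servers that actually need a nontrivial partner) need not land in $B \setminus A$ — the image can fall in $A \cap B$, where the identity part of your proposed $M_i$ has already claimed the target. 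You therefore get a multi-valued map, not a bijection. Your aside that ``moving only the deleted servers to their partners and leaving everything else fixed is a valid matching'' is exactly the false step.

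The paper's proof repairs this with a chaining argument, which is the one idea missing from your proposal. Overlay all the edges $\{f_i\} \cup Y_i$ and $\left(\bigcup_{j=0}^{i-1} W_j\right) \cup E_i$ on the line. Starting from an imaginary server $s \in S_\iota(\hat t_i) \setminus S_\rho(\hat t_i)$, note $s = s_{\sigma(t_1)}$ for some $t_1$; follow the real edge to $r_{t_1}$, then the simulated/imaginary edge to $s_{\mu(i,t_1)}$. If that server is available, stop; otherwise it equals $s_{\sigma(t_2)}$ and you continue. Because the chains are edge-disjoint and each eventually terminates at an available server, this does produce a bijection $M_i$, and the triangle inequality along each chain gives the stated bound directly, with no appeal to the 1-D structure theorem at all. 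Your fallback — the ``structure-theorem telescoping'' — would need two additional facts you do not prove (that the 1-D matching cost between $S\setminus B$ and $S\setminus A$ is unchanged by deleting common points, and that the optimal matching of $A\setminus B$ to $B\setminus A$ is bounded by the cost of \emph{any} bijection $A\to B$), the second of which is essentially the chaining argument in disguise. So the proposal has a genuine gap: it needs the chaining step to close.
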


\begin{proof}\cite{harmonic-alg}
Cover the line with $\{f_i\} \cup Y_i$ and $\left(\bigcup_{j=0}^{i-1} W_j\right) \cup E_i$. For all imaginary and available servers at the same location, match them together. Otherwise, for each remaining imaginary server in $S_\iota\left(\hat{t_i}\right)$, follow the edges of this covering until an available server in $S_\rho\left(\hat{t_i}\right)$ is reached, and match them together. Via the triangle inequality, the induced matching $M_i : S_\iota\left(\hat{t_i}\right) \rightarrow S_\rho\left(\hat{t_i}\right)$ has

\[\text{cost}\left(M_i\right) \leq |f_i| + |Y_i| + |E_i| + \sum_{j=0}^{i-1} |W_j| \]
\end{proof}

\begin{lemma}\label{lemma:bound_Wi}
$\mathbb{E}\left[|W_i|\right] \leq C \cdot Z_i + \mathbb{E}\left[|E_i| + \sum_{j=0}^{i-1} |W_j|\right]$.
\end{lemma}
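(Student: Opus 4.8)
\textbf{Proof proposal for \Cref{lemma:bound_Wi}.}

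The plan is to chain together the three preceding lemmas. The quantity we want to control is $\mathbb{E}[|W_i|]$, the total assignment cost incurred by Modified Doubled Harmonic on the non-triggering requests of the phase with estimate $Z_i$. First I would apply \Cref{lemma:ec_invariant} to this phase: renumbering the non-triggering requests of $B_i$ as $r_1,\dots,r_k$, the lemma says that $D(S_\rho(t),S_\iota(t)) + \sum_{j=1}^{t-1}\bigl(d(r_j,s_{\sigma(j)}) - d(r_j,s_{\gamma(j)})\bigr)$ is non-increasing in $t$ with probability one. Evaluating this at $t=1$ (where $S_\rho(1) = S_\rho(\hat t_i)$, $S_\iota(1) = S_\iota(\hat t_i)$, and the sum is empty) and at $t=k+1$ (where $D \ge 0$), and rearranging, gives $\sum_{j=1}^k d(r_j,s_{\sigma(j)}) \le D\bigl(S_\rho(\hat t_i), S_\iota(\hat t_i)\bigr) + \sum_{j=1}^k d(r_j,s_{\gamma(j)})$, i.e. $|W_i| \le D\bigl(S_\rho(\hat t_i),S_\iota(\hat t_i)\bigr) + |X_i|$ pointwise (over the randomness).

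Next I would bound the term $D\bigl(S_\rho(\hat t_i),S_\iota(\hat t_i)\bigr)$. Since $D(\cdot,\cdot)$ is the cost of the \emph{optimal} matching between the two server sets, it is at most the cost of any particular bijection, and \Cref{lemma:costMi} exhibits a bijection $M_i$ with $\mathrm{cost}(M_i) \le |f_i| + |Y_i| + |E_i| + \sum_{j=0}^{i-1}|W_j|$. Hence $D\bigl(S_\rho(\hat t_i),S_\iota(\hat t_i)\bigr) \le |f_i| + |Y_i| + |E_i| + \sum_{j=0}^{i-1}|W_j|$ pointwise. Combining with the previous paragraph,
\[
|W_i| \;\le\; |X_i| + |f_i| + |Y_i| + |E_i| + \sum_{j=0}^{i-1}|W_j|.
\]
Taking expectations, using linearity, and then applying \Cref{lemma:EV_ideal} to the first three terms, namely $\mathbb{E}[|X_i| + |f_i| + |Y_i|] \le C\cdot Z_i$, yields $\mathbb{E}[|W_i|] \le C\cdot Z_i + \mathbb{E}\bigl[|E_i| + \sum_{j=0}^{i-1}|W_j|\bigr]$, which is exactly the claim.

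The only genuinely delicate point is making sure the indices line up: that the set $S_\rho(\hat t_i), S_\iota(\hat t_i)$ appearing as the $t=1$ value in \Cref{lemma:ec_invariant} is the same one bounded in \Cref{lemma:costMi} — this is why $\hat t_i = \tau_{i-1}+2$ was defined as the time of the first request of $B_i$, so both lemmas refer to the server configuration immediately after the triggering request $\rho_i$ and its adjustment operation. One should also note that \Cref{lemma:ec_invariant} is stated to hold with probability one, so the pointwise inequality $|W_i| \le D(\cdots) + |X_i|$ is valid on the whole sample space and survives taking expectations without any measurability fuss. Everything else is just linearity of expectation and monotonicity of $\min$ over matchings, so I do not anticipate a real obstacle beyond this bookkeeping.
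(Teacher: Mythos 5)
Your proposal is correct and follows essentially the same route as the paper: apply \Cref{lemma:ec_invariant} across the phase (the paper writes this as $g(\tau_i+1) \le g(\hat t_i)$, which in your renumbered indices is $g(k+1) \le g(1)$), drop the nonnegative $D$ term, bound $D(S_\rho(\hat t_i),S_\iota(\hat t_i))$ by $\text{cost}(M_i)$ via \Cref{lemma:costMi}, then take expectations and invoke \Cref{lemma:EV_ideal}. The bookkeeping remark about $\hat t_i$ matching up between the two lemmas is exactly the point the paper relies on implicitly.
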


\begin{proof}
$\hat{t_i}$ is the time of the first request in $B_i$, and $\tau_i + 1$ is the time of the $(i+1)$'st triggering request $\rho_{i+1}$. Thus $\hat{t_i} \leq \tau_i + 1$ and so \Cref{lemma:ec_invariant} implies $g\left(\tau_i + 1\right) \leq g\left(\hat{t_i}\right)$. Thus

\[D\left(S_\rho\left(\tau_i + 1\right), S_\iota\left(\tau_i + 1\right)\right) + \left(|W_i| - |X_i|\right) \leq D\left(S_\rho\left(\hat{t_i}\right), S_\iota\left(\hat{t_i}\right)\right)\]

This gives

\begin{align*}
|W_i| &\leq |X_i| + D\left(S_\rho\left(\hat{t_i}\right), S_\iota\left(\hat{t_i}\right)\right) - D\left(S_\rho\left(\tau_i + 1\right), S_\iota\left(\tau_i + 1\right)\right) \\
&\leq |X_i| + D\left(S_\rho\left(\hat{t_i}\right), S_\iota\left(\hat{t_i}\right)\right) \\
&\leq |X_i| + \text{cost}(M_i) \\
&\leq |X_i| + |f_i| + |Y_i| + |E_i| + \sum_{j=0}^{i-1} |W_j|
\end{align*}

The third inequality follows from the fact that $D\left(S_\rho\left(\hat{t_i}\right), S_\iota\left(\hat{t_i}\right)\right)$ is the optimal cost of matching $S_\rho\left(\hat{t_i}\right)$ and $S_\iota\left(\hat{t_i}\right)$. The last inequality follows from \Cref{lemma:costMi}. Applying \Cref{lemma:EV_ideal} gives the desired result.
\end{proof}

\subsection{Bounding Trigger Costs}

We now prove a sequence of lemmas with the eventual goal of proving \Cref{lemma:bound_ei}. We begin by introducing some basic functions to compute assignment costs.
In \Cref{lemma:ei_crucial}, we bound the cost to Modified Doubled Harmonic for a triggering request
by twice the greedy cost (which is clearly $O(\log n)$ times OPT) and the cost to Modified Doubled Harmonic if the request had arrived at a nearby non-trigger point. We bound the greedy cost of $\rho_i$ in \Cref{lemma:greedy_edge}, and the cost bound on the non-trigger points is a simple corollary from \Cref{lemma:bound_Wi}. Combining these results, we prove \Cref{lemma:bound_ei}.

First, we introduce some basic functions for computing assignment costs. The function $L_h(x)$ is the linear transformation of $\left(s_{\rho(h)}, s_{\rho(h+1)}\right)$ onto $(0, 1)$ (which maps $s_{\rho(h)}$ to $0$ and $s_{\rho(h+1)}$ to $1$). The function $N(\alpha, \gamma) = \alpha(1-\gamma) + (1-\alpha)\gamma$ is a ``normalized'' assignment cost, where we assume the adjacent available servers exist at $0$ and $1$. The following lemma makes these ideas rigorous.

\begin{lemma}\label{lemma:NL_funcs}
Suppose request $r_t$ appears in between adjacent available servers $s_{\rho(h)}$ and $s_{\rho(h+1)}$. Further, suppose $r_t$ assigns to $s_{\rho(h)}, s_{\rho(h+1)}$ with probabilities $1-p, p$. Then the expected cost of $r_t$'s assignment is $\left(s_{\rho(h+1)} - s_{\rho(h)}\right) N(L_h(r_t), p)$.
\end{lemma}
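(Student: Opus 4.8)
\textbf{Plan of proof for Lemma~\ref{lemma:NL_funcs}.}
The statement is essentially a bookkeeping identity, so the plan is to compute the expected cost directly from the definitions of $L_h$ and $N$ and verify the two cases (assignment to $s_{\rho(h)}$ versus to $s_{\rho(h+1)}$) reconcile. First I would set $\Delta = s_{\rho(h+1)} - s_{\rho(h)}$ and $\alpha = L_h(r_t) = \bigl(r_t - s_{\rho(h)}\bigr)/\Delta$, so that $r_t - s_{\rho(h)} = \alpha \Delta$ and $s_{\rho(h+1)} - r_t = (1-\alpha)\Delta$. Then the cost incurred when $r_t$ is assigned to $s_{\rho(h)}$ is exactly $d(r_t, s_{\rho(h)}) = \alpha\Delta$, and the cost when $r_t$ is assigned to $s_{\rho(h+1)}$ is $d(r_t, s_{\rho(h+1)}) = (1-\alpha)\Delta$; here I would note that because the algorithm is a neighbor algorithm and $r_t$ lies strictly between the two adjacent available servers, these are the only two possibilities and both distances are nonnegative, so no absolute-value casework is needed.

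Next I would take the expectation over the randomness of the assignment. With probability $1-p$ the request goes to $s_{\rho(h)}$ at cost $\alpha\Delta$, and with probability $p$ it goes to $s_{\rho(h+1)}$ at cost $(1-\alpha)\Delta$. Hence the expected cost is
\[
(1-p)\,\alpha\Delta + p\,(1-\alpha)\Delta = \Delta\bigl(\alpha(1-p) + (1-\alpha)p\bigr) = \Delta \cdot N(\alpha, p),
\]
using the definition $N(\alpha,\gamma) = \alpha(1-\gamma) + (1-\alpha)\gamma$ with $\gamma = p$. Substituting back $\Delta = s_{\rho(h+1)} - s_{\rho(h)}$ and $\alpha = L_h(r_t)$ gives precisely $\bigl(s_{\rho(h+1)} - s_{\rho(h)}\bigr) N(L_h(r_t), p)$, which is the claim.

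There is no real obstacle here; the only point requiring a sentence of care is justifying that the two distances are $\alpha\Delta$ and $(1-\alpha)\Delta$ with no sign issues, which follows from $s_{\rho(h)} < r_t < s_{\rho(h+1)}$, and that the support of the assignment is $\{s_{\rho(h)}, s_{\rho(h+1)}\}$, which is immediate from the hypothesis that $r_t$ assigns to those two servers with probabilities $1-p$ and $p$. The lemma then serves as the normalized-cost primitive invoked in the subsequent trigger-cost lemmas.
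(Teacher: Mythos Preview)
Your proof is correct and is exactly the ``simple computation'' the paper has in mind: both approaches expand the expected cost as $(1-p)\,d(r_t,s_{\rho(h)}) + p\,d(r_t,s_{\rho(h+1)})$, rewrite the two distances as $\alpha\Delta$ and $(1-\alpha)\Delta$ via the affine map $L_h$, and factor out $\Delta$ to obtain $\Delta\cdot N(\alpha,p)$. There is nothing to add.
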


\begin{proof}
The proof follows directly from simple computation.
\end{proof}

The utility of decomposing $r_t$'s assignment cost in this way comes from the fact that we may now concern ourselves with studying $N$, the normalized assignment cost, which simplifies much of the computation. Next, we establish some useful facts about the function $N$. Each fact will be used in bounding the cost in each subcase of Case 5 of \Cref{def:mdh}.

\begin{lemma}\label{lemma:N_facts}
The following facts hold for all $\alpha, \beta, \gamma \in [0, 1]$.

\begin{enumerate}[(a)]
    \item If $\alpha \leq \beta$ and $\gamma \leq \frac{1}{2}$, then $N\left(\alpha, \gamma\right) \leq N(\beta, \gamma)$.
    \item If $\alpha \geq \beta$ and $\gamma \geq \frac{1}{2}$, then $N\left(\alpha, \gamma\right) \leq N(\beta, \gamma)$.
    \item If $\beta \leq \alpha \leq \frac{1}{2}$ and $\gamma \leq \frac{1}{2}$, then $N\left(\alpha, \gamma\right) \leq 2\max\left(\alpha, N(\beta, \gamma)\right)$.
    \item If $\beta \geq \alpha \geq \frac{1}{2}$ and $\gamma \geq \frac{1}{2}$, then $N\left(\alpha, \gamma\right) \leq 2\max\left(1 - \alpha, N(\beta, \gamma)\right)$.
\end{enumerate}

\end{lemma}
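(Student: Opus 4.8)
The plan is to exploit the fact that $N$ is affine in each argument separately. Writing $N(\alpha,\gamma)=\alpha(1-2\gamma)+\gamma=\gamma(1-2\alpha)+\alpha$, I see that $N(\cdot,\gamma)$ is non-decreasing in its first argument when $\gamma\le\tfrac12$ and non-increasing when $\gamma\ge\tfrac12$. Parts (a) and (b) are then immediate: under (a), $\gamma\le\tfrac12$ and $\alpha\le\beta$ give $N(\alpha,\gamma)\le N(\beta,\gamma)$; under (b), $\gamma\ge\tfrac12$ and $\alpha\ge\beta$ give the same conclusion.

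For part (c), the starting observation is the elementary bound $N(\alpha,\gamma)=\alpha+\gamma-2\alpha\gamma\le\alpha+\gamma$, valid since $\alpha,\gamma\ge 0$. I would then split on the relative size of $\alpha$ and $\gamma$. If $\gamma\le\alpha$, then $N(\alpha,\gamma)\le\alpha+\gamma\le 2\alpha\le 2\max(\alpha,N(\beta,\gamma))$. If $\gamma>\alpha$, I use that $\gamma\le\tfrac12$ forces $1-2\gamma\ge 0$, so $N(\beta,\gamma)=\beta(1-2\gamma)+\gamma\ge\gamma$; hence $N(\alpha,\gamma)\le\alpha+\gamma\le 2\gamma\le 2N(\beta,\gamma)\le 2\max(\alpha,N(\beta,\gamma))$. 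Either way the claimed inequality holds. (Note that $\beta\le\alpha$ is not even needed here, only $\beta\ge 0$.)

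For part (d), the clean route is the reflection symmetry $N(\alpha,\gamma)=N(1-\alpha,1-\gamma)$, which follows by direct substitution into the definition. Setting $\alpha'=1-\alpha$, $\beta'=1-\beta$, $\gamma'=1-\gamma$, the hypotheses $\beta\ge\alpha\ge\tfrac12$ and $\gamma\ge\tfrac12$ translate exactly to $\beta'\le\alpha'\le\tfrac12$ and $\gamma'\le\tfrac12$, so part (c) applies and yields $N(\alpha,\gamma)=N(\alpha',\gamma')\le 2\max(\alpha',N(\beta',\gamma'))=2\max(1-\alpha,N(\beta,\gamma))$, as desired. There is no real obstacle here; the only mild subtlety is getting the case split in (c) right and remembering to invoke the symmetry in (d) rather than redoing the computation.
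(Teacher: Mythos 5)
Your proof is correct and takes essentially the same route as the paper: parts (a) and (b) by affinity in $\alpha$, part (d) by the reflection symmetry $N(\alpha,\gamma)=N(1-\alpha,1-\gamma)$, and part (c) by the bound $N(\alpha,\gamma)\le\alpha+\gamma$ combined with $N(\beta,\gamma)\ge\gamma$ when $\gamma\le\tfrac12$. The only difference is cosmetic: the paper establishes (c) via the ratio bounds $N(\alpha,\gamma)/\alpha\le 1+\gamma/\alpha$ and $N(\alpha,\gamma)/N(\beta,\gamma)\le 1+\alpha/\gamma$ together with $\min(\gamma/\alpha,\alpha/\gamma)\le 1$, whereas you split directly on $\gamma\le\alpha$ versus $\gamma>\alpha$, which avoids dividing by quantities that could vanish; your side remark that the hypothesis $\beta\le\alpha\le\tfrac12$ is not actually needed for (c) is also accurate.
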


\begin{proof}
\hfill
\begin{enumerate}[(a)]
    \item This follows directly via simple computation.
    \item This follows directly via simple computation.
    \item \[ \frac{N(\alpha, \gamma)}{\alpha} = \frac{\gamma + \alpha - 2\gamma\alpha}{\alpha} \leq \frac{\gamma + \alpha}{\alpha} = 1 + \frac{\gamma}{\alpha} \]
    \[ \frac{N(\alpha, \gamma)}{N(\beta, \gamma)} = \frac{\gamma + \alpha - 2\gamma\alpha}{\gamma + \beta - 2\gamma\beta} \leq \frac{\gamma + \alpha}{\gamma + \beta(1-2\gamma)} \leq \frac{\gamma + \alpha}{\gamma} = 1 + \frac{\alpha}{\gamma}\]
    Because $\min\left(\frac{\gamma}{\alpha}, \frac{\alpha}{\gamma}\right) \leq 1$, we know $N(\alpha, \gamma) \leq 2\alpha$ or $N(\alpha, \gamma) \leq 2N(\beta, \gamma)$. Thus $N\left(\alpha, \gamma\right) \leq 2\max\left(\alpha, N(\beta, \gamma)\right)$.
    \item Via direct computation, $N(\alpha, \gamma) = N(1-\alpha, 1-\gamma)$ and $N(\beta, \gamma) = N(1-\beta, 1-\gamma)$. Thus upon application of (c), we have
    \begin{align*}
        N(1-\alpha, 1-\gamma) &\leq 2\max(1-\alpha, N(1-\beta, 1-\gamma)) \\
        N(\alpha, \gamma) &\leq 2\max(1-\alpha, N(\beta, \gamma))
    \end{align*}
\end{enumerate}
\end{proof}

\begin{lemma}\label{lemma:ei_crucial}
Consider a triggering request $r_t$. 
Let $s_j$ be the available server closest to $r_t$.
Let $y_{\ell}$ and $y_r$ be defined as in the Modified Doubled Harmonic algorithm,
and let $s_\ell$ and $s_r$ be the available servers that Modified Doubled Harmonic
would have assigned a request arriving at $y_\ell$ and $y_r$, respectively. 
Then 
$$\mathbb{E}\left[ d\left(r_t, s_{\sigma(t)}\right) \right] \le 2\max\left(   \mathbb{E}\left[ d\left(y_\ell, s_\ell\right)  \right], \mathbb{E}\left[ d\left(y_r, s_r\right)  \right], d\left(r_t, s_j\right)
\right)$$
\end{lemma}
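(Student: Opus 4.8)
The plan is to handle the four subcases of Case 5 of the Modified Doubled Harmonic definition separately, using \Cref{lemma:NL_funcs} to rewrite every relevant expected assignment cost in the normalized form $\left(s_{\rho(h+1)} - s_{\rho(h)}\right) N(\alpha, \gamma)$, and then invoke the matching fact from \Cref{lemma:N_facts}. Here $s_{\rho(h)}$ and $s_{\rho(h+1)}$ are the available servers flanking $r_t$ (so in particular $s_j$ is whichever of these is closer, and $d(r_t, s_j) = \left(s_{\rho(h+1)} - s_{\rho(h)}\right)\min(L_h(r_t), 1 - L_h(r_t))$). Write $\alpha = L_h(r_t)$, $\alpha_\ell = L_h(y_\ell)$, $\alpha_r = L_h(y_r)$; since $y_\ell$ is reached by moving left from $r_t$ and $y_r$ by moving right, we have $\alpha_\ell \leq \alpha \leq \alpha_r$, and since $y_\ell, y_r$ are between the same two available servers, the common scaling factor $\left(s_{\rho(h+1)} - s_{\rho(h)}\right)$ appears in all three costs $d(r_t, s_{\sigma(t)})$, $d(y_\ell, s_\ell)$, $d(y_r, s_r)$, $d(r_t, s_j)$, so it suffices to prove the inequality for the normalized quantities. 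The probability $\gamma$ that $r_t$ moves right is, by Case 5, exactly the probability that a request at $y_\ell$ or $y_r$ moves right (whichever the algorithm mimics), so the normalized assignment cost of $r_t$ is $N(\alpha, \gamma)$ while that of $y_\ell$ (resp. $y_r$) is $N(\alpha_\ell, \gamma)$ (resp. $N(\alpha_r, \gamma)$) with the \emph{same} $\gamma$; this is the crucial alignment that makes the facts of \Cref{lemma:N_facts} applicable.

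Now the casework. In subcase 5e, $R(s_{\rho(h)}, y_r, s_{\rho(h+1)}) = \gamma < \tfrac12$ and the algorithm mimics $y_r$; since $\alpha \leq \alpha_r$ and $\gamma \leq \tfrac12$, \Cref{lemma:N_facts}(a) gives $N(\alpha, \gamma) \leq N(\alpha_r, \gamma) \leq 2\,\mathbb{E}[\text{normalized }d(y_r,s_r)]$ after taking expectations, which is within the claimed bound. In subcase 5f, $\gamma = R(s_{\rho(h)}, y_\ell, s_{\rho(h+1)}) > \tfrac12$ and the algorithm mimics $y_\ell$; since $\alpha \geq \alpha_\ell$ and $\gamma \geq \tfrac12$, \Cref{lemma:N_facts}(b) gives $N(\alpha, \gamma) \leq N(\alpha_\ell, \gamma)$, again within the bound. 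In subcase 5g, $r_t < m$ means $\alpha < \tfrac12$, and we are in the regime where $\gamma \leq \tfrac12$ (because we failed the test of 5e, so $R(s_{\rho(h)}, y_r, \cdot) \geq \tfrac12$... — actually here one must be careful: the relevant $\gamma$ is $R(s_{\rho(h)}, y_\ell, s_{\rho(h+1)})$ since we mimic $y_\ell$, and failing 5f means this is $\leq \tfrac12$); with $\alpha_\ell \leq \alpha \leq \tfrac12$ and $\gamma \leq \tfrac12$, \Cref{lemma:N_facts}(c) yields $N(\alpha, \gamma) \leq 2\max(\alpha, N(\alpha_\ell, \gamma))$, and since the normalized $d(r_t, s_j) = \min(\alpha, 1-\alpha) = \alpha$ here, this is exactly $2\max(\text{normalized }d(r_t,s_j), \text{normalized }d(y_\ell, s_\ell))$. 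Subcase 5h is the mirror image, using \Cref{lemma:N_facts}(d) with $\alpha \geq \tfrac12$ and the normalized $d(r_t, s_j) = 1 - \alpha$. Finally, taking expectations over the algorithm's internal randomness (the choices made during the adjustment operation triggered by $\rho_i$, which affect which server $s_\ell, s_r, s_{\sigma(t)}$ ends up being) and using linearity and $\mathbb{E}[\max] \geq \max[\mathbb{E}]$ appropriately — actually we need $\mathbb{E}[N(\alpha,\gamma)] \leq 2\max(\mathbb{E}[\cdots])$, which follows since the pointwise bound $N(\alpha,\gamma) \leq 2\max(A, B)$ with $A, B \geq 0$ gives $\mathbb{E}[N] \leq 2\mathbb{E}[\max(A,B)] \leq 2\max(\mathbb{E}[A],\mathbb{E}[B]) + 2\mathbb{E}[\min(A,B)]$, so one should instead argue $\mathbb{E}[\max(A,B)] \leq \mathbb{E}[A] + \mathbb{E}[B] \leq 2\max(\mathbb{E}[A], \mathbb{E}[B])$ — assembles the four cases into the stated inequality.

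The main obstacle I anticipate is bookkeeping the conditioning correctly: the quantities $y_\ell, y_r$ and the tests in Case 5 are deterministic given the available/imaginary configuration at time $t$, but $s_\ell, s_r$ (hence $d(y_\ell, s_\ell)$, $d(y_r, s_r)$) and $s_{\sigma(t)}$ depend on further coin flips, so I must be careful that the $\gamma$ appearing in $N(\alpha,\gamma)$ for $r_t$ is literally the same random variable (or the same constant) as the one appearing for the mimicked point, and that the "mimic the assignment of a request appearing at $y$" clause means $s_{\sigma(t)}$ has exactly the distribution of the server assigned to a request at $y$. A secondary subtlety is confirming that whenever we reach subcases 5g/5h we genuinely have the sign conditions on $\gamma$ needed for facts (c)/(d) — this comes from having \emph{failed} the preceding tests (5e fails $\Rightarrow R(\cdot, y_r, \cdot) \geq \tfrac12$, and since $R$ is monotone in the middle argument and $y_\ell \leq y_r$... wait, $R$ as defined is inversely proportional to pseudo-distance to the right server, so $R$ is \emph{increasing} as the point moves right, giving $R(\cdot, y_\ell, \cdot) \leq R(\cdot, y_r, \cdot)$; failing 5f gives $R(\cdot, y_\ell, \cdot) \leq \tfrac12$, which is the condition $\gamma \leq \tfrac12$ needed for 5g). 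I would state this monotonicity of $R$ as a one-line sub-observation before the casework so that the sign conditions in 5g/5h are transparent.
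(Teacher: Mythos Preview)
Your approach is essentially the paper's: normalize via $L_h$ and $N$, then dispatch the four subcases of Case~5 to the four facts of \Cref{lemma:N_facts}, exactly as the paper does (the paper also disposes of Cases~1--3 in one line, which you should add). Your closing worry about a further layer of expectation is a red herring: conditional on the state at time $t$ the quantities $\alpha,\alpha_\ell,\alpha_r$ and the mimicked probability $\gamma$ are all deterministic, and $N(\alpha,\gamma)$ \emph{is} the (normalized) expected assignment cost over the single left/right coin flip, so the inequalities $N(\alpha,\gamma)\le 2\max(\cdots)$ are already the expected-cost bounds you want---no $\mathbb{E}[\max]$ manipulation (and hence no extra factor of~2) is needed.
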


\begin{proof} We proceed by showing the claim holds in each potential trigger case of \Cref{def:mdh}. In Cases 1, 2, and 3, the claim trivially holds, because $\rho_i$ is assigned greedily to $s_j$. It remains to consider Case 5. Suppose $\rho_i$ appeared in between adjacent available servers $s_{\rho(h)}$ and $s_{\rho(h+1)}$, and let $m$ be the midpoint of $\left(s_{\rho(h)}, s_{\rho(h+1)}\right)$. Suppose that under the linear transformation $L_h : \left(s_{\rho(h)}, s_{\rho(h+1)}\right) \rightarrow (0, 1)$, $\rho_i$ maps to $\alpha$, $y_\ell$ maps to $\beta_\ell$, and $y_r$ maps to $\beta_r$. Further, $m$ trivially maps to $\frac{1}{2}$. In comparing the costs of assignments in $\left(s_{\rho(h)}, s_{\rho(h+1)}\right)$, it suffices to compare the costs of the normalized assignments in $(0, 1)$, given the normalization factor of $s_{\rho(h+1)} - s_{\rho(h)}$ is always the same.

Let $p_\ell = R(s_{\rho(h)}, y_\ell, s_{\rho(h+1)})$ and $p_r = R(s_{\rho(h)}, y_r, s_{\rho(h+1)})$. \Cref{lemma:N_facts} cleanly handles each subcase of Case 5.

\begin{enumerate}[(a)]
    \item If $p_r < \frac{1}{2}$, then $\rho_i$ mimics the assignment of a request arriving at $y_r$. We know $\alpha \leq \beta_r$, and so $N\left(\alpha, p_r\right) \leq N\left(\beta_r, p_r\right)$.
    \item Else if $p_\ell > \frac{1}{2}$, then $\rho_i$ mimics the assignment of a request appearing at $y_\ell$. We know $\alpha \geq \beta_\ell$, and so $N\left(\alpha, p_\ell\right) \leq N\left(\beta_\ell, p_\ell\right)$.
    \item Else if $\rho_i < m$, then $\rho_i$ mimics the assignment of a request appearing at $y_\ell$. We know $\beta_\ell \leq \alpha \leq \frac{1}{2}$ and $p_\ell \leq \frac{1}{2}$, and so $N\left(\alpha, p_\ell\right) \leq 2\max\left(\alpha, N(\beta_\ell, p_\ell)\right)$. Note that $\alpha$ is simply the normalized greedy assignment of $\rho_i$.
    \item Else $\rho_i \geq m$, and $\rho_i$ mimics the assignment of a request appearing at $y_r$. We know $\beta_r \geq \alpha \geq \frac{1}{2}$ and $p_r \geq \frac{1}{2}$, and so $N\left(\alpha, p_r\right) \leq 2\max\left(1-\alpha, N(\beta_r, p_r)\right)$. Note that $1-\alpha$ is simply the normalized greedy assignment of $\rho_i$.
\end{enumerate}

Given $\rho_i$ assigns rightwards to $s_{\rho(h+1)}$ with probability $p$, in all cases, we have

\[ N(\alpha, p) \leq 2\max\left(N\left(\beta_\ell, p_\ell\right), N\left(\beta_r, p_r\right), \min(\alpha, 1-\alpha)\right) \]

Multiplying both sides by the normalization factor of $s_{\rho(h+1)} - s_{\rho(h)}$ gives the desired result.

\end{proof}

It remains to bound the cost of all individual non-trigger assignments and the greedy assignment. First, we obtain a bound on the greedy assignment of $\rho_i$. 

\begin{lemma}\label{lemma:greedy_edge}
For a triggering request $\rho_i$, let $s$ be the available server nearest to $\rho_i$. Then $\mathbb{E}\left[d\left(\rho_i, s\right)\right] \leq C \cdot Z_i +  \mathbb{E}\left[|E_{i-1}| + \sum_{j=0}^{i-1} |W_j|\right]$.
\end{lemma}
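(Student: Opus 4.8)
The plan is to bound the greedy cost $d(\rho_i, s)$ of the $i$-th triggering request by the cost of \emph{some} available server that is reachable through a short path, and then invoke the covering/matching argument already used in \Cref{lemma:costMi} and \Cref{lemma:bound_Wi}. Recall that $\rho_i = r_{\tau_{i-1}+1}$, so $\rho_i$ arrives when the set of available servers is $S_\rho(\tau_{i-1}+1)$ and the set of imaginary servers (after the adjustment operation triggered by $\rho_i$, which updates $Z$ to $Z_i$ and recomputes $S_\iota$) is $S_\iota$ as it stands at the start of $B_i$, i.e.\ $S_\iota(\hat t_i)$ with $\hat t_i = \tau_{i-1}+2$. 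First I would observe that since $\rho_i$ arrives at the location of some server (by our standing assumption that all requests arrive at server locations), and since that server location either still has an available server — in which case $d(\rho_i,s)=0$ and we are done — or it was occupied by some earlier assignment, there is always an imaginary server reachable from $\rho_i$'s location via the imaginary-server structure; in particular the nearest available server $s$ is no farther from $\rho_i$ than the nearest imaginary server is, plus the length of an edge path connecting that imaginary server to an available one.

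Concretely, I would reuse exactly the covering from \Cref{lemma:costMi}: the edge sets $\{f_i\} \cup Y_i$ together with $\left(\bigcup_{j=0}^{i-1} W_j\right) \cup E_{i-1}$ cover the line in a way that connects every imaginary server in $S_\iota(\hat t_i)$ to an available server in $S_\rho(\hat t_i)$. Since $\rho_i$ sits at a location that coincides with a server location, and hence either coincides with an available server (cost $0$) or lies on the path of one of these covering edges, following the covering edges from $\rho_i$'s location reaches an available server of $S_\rho(\hat t_i)$ at a cost at most $|f_i| + |Y_i| + |E_{i-1}| + \sum_{j=0}^{i-1}|W_j|$. Because $s$ is the \emph{nearest} available server to $\rho_i$, and $S_\rho(\hat t_i)$ differs from $S_\rho(\tau_{i-1}+1)$ by at most the single server used for $\rho_i$ itself, we get $d(\rho_i, s) \le |f_i| + |Y_i| + |E_{i-1}| + \sum_{j=0}^{i-1}|W_j|$ deterministically (with a harmless adjustment for that one server, which only helps since removing a server cannot decrease the nearest-neighbor distance — I would phrase this carefully, perhaps by noting $s$ is available \emph{before} $\rho_i$'s own assignment so $S_\rho(\hat t_i) \subseteq S_\rho(\tau_{i-1}+1)$ up to $\rho_i$'s server, or simply that $\rho_i$'s location is a server location so the covering argument applies verbatim to $S_\rho(\tau_{i-1}+1)$).

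Taking expectations and applying \Cref{lemma:EV_ideal}, which gives $\mathbb{E}[|X_i| + |f_i| + |Y_i|] \le C \cdot Z_i$ and in particular $\mathbb{E}[|f_i| + |Y_i|] \le C \cdot Z_i$, yields
\[
\mathbb{E}\left[d(\rho_i, s)\right] \le C \cdot Z_i + \mathbb{E}\left[|E_{i-1}| + \sum_{j=0}^{i-1}|W_j|\right],
\]
which is the claimed bound. The main obstacle I anticipate is the bookkeeping around exactly which snapshot of the available-server set is relevant — $S_\rho(\tau_{i-1}+1)$ versus $S_\rho(\hat t_i)$, i.e.\ before versus after $\rho_i$'s own assignment and the adjustment operation — and making sure the covering-edge path argument from \Cref{lemma:costMi} is applied to the correct sets so that the single edge $f_i$ (the post-adjustment imaginary move of $\rho_i$) is legitimately available as a "start" of the path. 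A secondary subtlety is justifying that $\rho_i$'s arrival location, being a server location, always lies on the support of the covering edges (so that "following the edges" is well-defined from that point); this follows because a server location not on any covering edge would have to host both an imaginary and an available server, matched to each other at cost $0$, in which case $d(\rho_i,s)=0$ trivially.
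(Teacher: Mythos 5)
Your proposal is essentially correct and follows the same covering-edge strategy as the paper's own proof, modulo two points of execution. First, the paper runs a \emph{fresh} adjustment operation and denotes the resulting quantities $Y_i'$, $S_\iota'$, $f_i'$ ``solely for the purposes of our argumentation''; this makes transparent that the deterministic quantity $d(\rho_i,s)$ is being upper-bounded by random variables that have no causal connection to it, after which one takes expectations and applies \Cref{lemma:EV_ideal}. Your use of the actual $Y_i, f_i$ is mathematically equivalent since they are distributed identically, so this is stylistic. Second, and more substantively, your justification for why the covering path from $\rho_i$ must terminate at an available server is aimed at the wrong fact. You argue that $\rho_i$'s location is a server location and hence lies on the support of the covering edges — but mere coverage is not enough, since the covering partition containing $\rho_i$'s location alone could a priori contain zero imaginary servers and hence zero available servers. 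The load-bearing step is the one the paper makes explicit: $f_i'$ connects $\rho_i$ to an imaginary server $s_{\iota(g)}' \in S_\iota'$, and the balance property of the covering (each partition has equally many imaginary servers in $S_\iota'$ and available servers in $S_\rho$) guarantees the partition containing $s_{\iota(g)}'$ holds at least one available server, reachable within distance $|f_i'| + |Y_i'| + |E_{i-1}| + \sum_{j<i}|W_j|$ by the triangle inequality. You do include $f_i$ in the covering, so your argument implicitly has this structure, but the explicit rationale you give (server location, hence on a covering edge) is a red herring; the imaginary-server/balance step should be stated instead. One last small slip: you say you ``reuse exactly the covering from \Cref{lemma:costMi},'' but that lemma's covering uses $E_i$ (including $e_i$), whereas here — correctly, as you in fact write — one must use $E_{i-1}$, since $\rho_i$'s own edge $e_i$ does not yet exist when bounding $d(\rho_i,s)$.
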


\begin{proof}
Run the adjustment operation on all requests up to $\rho_i = r_{\tau_{i-1}+1}$ to generate simulated assigned servers $s_{\mu(i, t)}'$ and a set of simulated assignments $Y_i' = \bigcup_{t=1}^{\tau_{i-1}} \left\{\left(r_t, s_{\mu(i, t)'}\right)\right\}$ solely for the purposes of our argumentation. This produces a set of imaginary servers $S_\iota'$. Cover the line with the edges in $Y_i'$ and the assigned edges $\left(\bigcup_{j=0}^{i-1} W_j\right) \cup E_{i-1}$. This covering partitions the line into disjoint intervals for which each interval has the same number of requests, servers in $Y_i'$, and previously assigned servers in $\{s_{\sigma(1)}, s_{\sigma(2)}, \dots, s_{\sigma\left(\tau_{i-1}\right)}\}$. By extension, each partition must have the same number of imaginary servers in $S_\iota'$ and available servers in $S_\rho$.

Now pick an imaginary server $s_{\gamma(t')}'$ for the triggering request $\rho_i$ in the same way we picked $s_{\gamma(t')}$, where here $t' = \tau_{i-1}+1$. This gives a generated imaginary move $f_i' = \left\{\left(r_{t'}, s_{\gamma(t')}'\right)\right\}$, and add $f_i'$ to this covering. From $\rho_i$, follow $f_i'$, reaching a (previously) imaginary server $s_{\iota(g)}' \in S_\iota'$. Some available server must exist within the partition containing $s_{\iota(g)}'$, and so the triangle inequality ensures that some available server exists at most distance $|f_i'| + |Y_i'| + |E_{i-1}| + \sum_{j=0}^{i-1} |W_j|$ from $\rho_i$. Given $s$ is the available server nearest to $\rho_i$, we must have

\begin{align*}
\mathbb{E}\left[d\left(\rho_i, s\right)\right] &\leq \mathbb{E}\left[|f_i'| + |Y_i'| + |E_{i-1}| + \sum_{j=0}^{i-1} |W_j|\right] \\
&\leq C \cdot Z_i + \mathbb{E}\left[|E_{i-1}| + \sum_{j=0}^{i-1} |W_j|\right]
\end{align*}

where in the final step we apply \Cref{lemma:EV_ideal}.
\end{proof}

Next, we obtain a bound on assignments of requests appearing at non-trigger points, which is a direct corollary from \Cref{lemma:bound_Wi}.

\begin{corollary}\label{cor:Wi}
For a non-trigger point $y$, let $s$ be the available server that Modified Doubled Harmonic would have assigned a request arriving at $y$, given the estimate is currently $Z = Z_{i-1}$. Then $\mathbb{E}\left[d(y, s)\right] \leq C \cdot Z_{i-1} + \mathbb{E}\left[|E_{i-1}| + \sum_{j=0}^{i-2} |W_j|\right]$.
\end{corollary}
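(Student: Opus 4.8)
The plan is to derive \Cref{cor:Wi} by replaying the argument of \Cref{lemma:bound_Wi} on a slightly enlarged request sequence in which $y$ appears as one extra non-triggering request of the phase with estimate $Z_{i-1}$. Concretely, fix the non-trigger point $y$ and form a new instance from the true one by appending, immediately after the last request of $B_{i-1}$ (equivalently, in the slot of the triggering request $\rho_i$), a single request $r'$ located at $y$, and then truncating the sequence. Since $y$ is a non-trigger point and $\mathrm{OPT}(t)$ is non-decreasing, $r'$ is non-triggering in this modified instance, so it joins the phase with estimate $Z_{i-1}$, enlarging its body of non-triggering requests to $B_{i-1}^{+}:=B_{i-1}\cup\{r'\}$. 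Everything determined before that body is untouched: the start-of-phase state $\bigl(S_\rho(\hat t_{i-1}),S_\iota(\hat t_{i-1})\bigr)$, the matching $M_{i-1}$, and the quantities $E_{i-1},\,W_0,\dots,W_{i-2},\,Y_{i-1},\,f_{i-1}$ depend only on the requests through $\rho_{i-1}$ and its adjustment operation, none of which involve $r'$. Moreover, when MDH processes $r'$ the estimate is $Z_{i-1}$ and the available servers are precisely those present just before $\rho_i$, so $r'$ is matched to exactly the server $s$ in the statement. Writing $W_{i-1}^{+}$ for the assigned-edge set of this enlarged body, we have $W_{i-1}^{+}=W_{i-1}\cup\{(r',s)\}$, hence $d(y,s)\le|W_{i-1}^{+}|$ with probability one.

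Next I would bound $\mathbb{E}[|W_{i-1}^{+}|]$ by the right-hand side of \Cref{cor:Wi}, which is just \Cref{lemma:bound_Wi} reproved verbatim for the modified instance. Applying \Cref{lemma:ec_invariant} to the phase with estimate $Z_{i-1}$ of the modified instance --- legitimate because $B_{i-1}^{+}$ is entirely non-triggering --- the associated potential is non-increasing from the start of the phase through the processing of $r'$; exactly as in the proof of \Cref{lemma:bound_Wi} this gives $|W_{i-1}^{+}|\le |X_{i-1}^{+}|+D\bigl(S_\rho(\hat t_{i-1}),S_\iota(\hat t_{i-1})\bigr)$, where $X_{i-1}^{+}$ is the enlarged set of chosen imaginary moves, and then \Cref{lemma:costMi} bounds $D\bigl(S_\rho(\hat t_{i-1}),S_\iota(\hat t_{i-1})\bigr)$ by $|f_{i-1}|+|Y_{i-1}|+|E_{i-1}|+\sum_{j=0}^{i-2}|W_j|$. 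Taking expectations and using \Cref{lemma:EV_ideal} in the form $\mathbb{E}\!\left[|X_{i-1}^{+}|+|f_{i-1}|+|Y_{i-1}|\right]\le C\cdot Z_{i-1}$ then yields $\mathbb{E}[|W_{i-1}^{+}|]\le C\cdot Z_{i-1}+\mathbb{E}\!\left[|E_{i-1}|+\sum_{j=0}^{i-2}|W_j|\right]$, which together with $d(y,s)\le|W_{i-1}^{+}|$ completes the proof.

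The only delicate point --- and hence the place I expect the real work to hide --- is justifying that \Cref{lemma:EV_ideal} survives the insertion of the extra request $r'$. This holds because that bound, taken from \cite{harmonic-alg}, caps the per-phase imaginary cost of Doubled Harmonic on an \emph{arbitrary} request sequence, so it applies to the modified instance, with the one extra imaginary move contributed by $r'$ simply absorbed into the $O(Z_{i-1}\log n)$ term; the terms $f_{i-1},Y_{i-1},E_{i-1}$, the $W_j$ with $j<i-1$, and $M_{i-1}$ are literally unchanged. Equivalently and more slickly, one may observe that the right-hand side of \Cref{lemma:bound_Wi} does not depend on the body $B_{i-1}$ of the phase at all, so that lemma already bounds $\mathbb{E}[|W_{i-1}|+d(y,s)]$ --- treating $y$ as one more member of $B_{i-1}$ --- and hence a fortiori $\mathbb{E}[d(y,s)]$. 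Everything else is routine bookkeeping.
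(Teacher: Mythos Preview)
Your proposal is correct and matches the paper's intent: the paper offers no detailed argument at all, stating only that \Cref{cor:Wi} ``is a direct corollary from \Cref{lemma:bound_Wi},'' and your ``slick'' reformulation at the end---that the right-hand side of \Cref{lemma:bound_Wi} is insensitive to the contents of the phase body, so one may simply regard $y$ as an additional member of $B_{i-1}$---is exactly the reading the paper has in mind. Your careful handling of \Cref{lemma:EV_ideal} under the extra request is the one non-obvious point, and your justification (that the per-phase imaginary-cost bound from \cite{harmonic-alg} depends only on $\mathrm{OPT}<Z_{i-1}$, which the non-trigger point preserves) is sound.
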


With all of the pieces in place, we establish a cost bound on $\mathbb{E}\left[|e_i|\right]$.

\begin{lemma}\label{lemma:bound_ei}
$\mathbb{E}\left[|e_i|\right] \leq 2\left(C \cdot Z_i +  \mathbb{E}\left[|E_{i-1}| + \sum_{j=0}^{i-1} |W_j|\right]\right)$. 
\end{lemma}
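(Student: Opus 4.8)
The plan is to combine \Cref{lemma:ei_crucial}, \Cref{lemma:greedy_edge}, and \Cref{cor:Wi} in the obvious way. Recall $e_i$ is the assigned edge of the triggering request $\rho_i$, so $|e_i| = d(r_{t'}, s_{\sigma(t')})$ with $t' = \tau_{i-1}+1$. By \Cref{lemma:ei_crucial}, writing $s_j$ for the available server nearest $\rho_i$ and $s_\ell, s_r$ for the servers MDH would assign to requests arriving at the nearby non-trigger points $y_\ell, y_r$, we have
\[
\mathbb{E}[|e_i|] \le 2\max\left(\mathbb{E}[d(y_\ell, s_\ell)],\ \mathbb{E}[d(y_r, s_r)],\ d(\rho_i, s_j)\right).
\]
So it suffices to bound each of the three quantities inside the max by $C \cdot Z_i + \mathbb{E}\left[|E_{i-1}| + \sum_{j=0}^{i-1}|W_j|\right]$, since the max of three quantities each at most $B$ is at most $B$, and then multiplying by $2$ gives the claim.

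For the greedy term $d(\rho_i, s_j)$: this is exactly the quantity bounded by \Cref{lemma:greedy_edge}, which gives $\mathbb{E}[d(\rho_i, s_j)] \le C \cdot Z_i + \mathbb{E}\left[|E_{i-1}| + \sum_{j=0}^{i-1}|W_j|\right]$. For the two non-trigger terms $d(y_\ell, s_\ell)$ and $d(y_r, s_r)$: here I would invoke \Cref{cor:Wi}. The subtlety is that \Cref{cor:Wi} is stated with estimate $Z = Z_{i-1}$ and bound $C \cdot Z_{i-1} + \mathbb{E}\left[|E_{i-1}| + \sum_{j=0}^{i-2}|W_j|\right]$. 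Since $y_\ell$ and $y_r$ are by construction non-trigger points — this is precisely why the Modified Doubled Harmonic algorithm moves the triggering request to $y_\ell$ or $y_r$ before the double step, so the estimate is still $Z_{i-1}$ when these phantom requests are "processed" — \Cref{cor:Wi} applies directly. Then I use $Z_{i-1} < Z_i$ and $\sum_{j=0}^{i-2}|W_j| \le \sum_{j=0}^{i-1}|W_j|$ to weaken the bound to $C \cdot Z_i + \mathbb{E}\left[|E_{i-1}| + \sum_{j=0}^{i-1}|W_j|\right]$, matching the form of the other two bounds.

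The main obstacle — really the only point requiring care — is confirming that the non-trigger points $y_\ell, y_r$ are genuinely being handled under the old estimate $Z_{i-1}$, so that \Cref{cor:Wi} is applicable with the right index, and that "the available server that MDH would have assigned a request arriving at $y$" in \Cref{cor:Wi} is the same $s_\ell$ (resp.\ $s_r$) appearing in \Cref{lemma:ei_crucial}. This is ensured by Case 5 of \Cref{def:mdh}: the triggering request mimics the assignment of a request appearing at $y_\ell$ or $y_r$, and since $y_\ell, y_r$ are defined to be non-trigger points (or interval endpoints, which are available servers and hence cost zero), the mimicked assignment is exactly a non-trigger assignment under estimate $Z_{i-1}$. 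Once this identification is made, everything else is the routine max-bounding and monotonicity of $Z_i$ described above, and the lemma follows.
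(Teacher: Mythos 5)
Your proof is correct and takes essentially the same approach as the paper's: the paper's own proof of \Cref{lemma:bound_ei} is precisely the one-line combination of \Cref{lemma:ei_crucial}, \Cref{lemma:greedy_edge}, and \Cref{cor:Wi} (the latter applied with estimate $Z_{i-1}$), and you have simply spelled out the bookkeeping — bounding each term in the max, then weakening $C \cdot Z_{i-1} + \mathbb{E}\bigl[|E_{i-1}| + \sum_{j=0}^{i-2}|W_j|\bigr]$ to the common form $C \cdot Z_i + \mathbb{E}\bigl[|E_{i-1}| + \sum_{j=0}^{i-1}|W_j|\bigr]$. Your remark about why \Cref{cor:Wi} applies at index $i-1$ (the phantom requests at $y_\ell, y_r$ are processed before the estimate inflates) matches the paper's parenthetical note verbatim in spirit.
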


\begin{proof}
The proof follows directly from application of \Cref{lemma:ei_crucial}, \Cref{cor:Wi}, and \Cref{lemma:greedy_edge}. Note that we apply \Cref{cor:Wi} when the estimate is $Z = Z_{i-1}$ because $\rho_i$ causes the estimate to inflate from $Z = Z_{i-1}$ to $Z = Z_i$.
\end{proof}

\subsection{Proving \Cref{thm:mdh-competitive}}

We now make the recursive bounds on $\mathbb{E}\left[|W_i|\right]$, $\mathbb{E}\left[|e_i|\right]$ established in \Cref{lemma:bound_Wi}, \Cref{lemma:bound_ei} explicit through induction. The key idea is that although $\mathbb{E}\left[|W_i|\right]$ and $\mathbb{E}\left[|e_i|\right]$ are bounded in terms of all previous assignments and imaginary moves, the geometrically increasing nature of $Z_i$ ensures their costs are simply on the order of $C \cdot Z_i$.

\begin{lemma}\label{lemma:induct_Wiei}
$\mathbb{E}\left[|W_i|\right] \leq 8C \cdot Z_i$ and $\mathbb{E}\left[|e_i|\right] \leq 5C \cdot Z_i$ for all $i \in [0, m]$.
\end{lemma}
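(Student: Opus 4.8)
The plan is to prove both bounds simultaneously by strong induction on $i \in [0, m]$, using the recursive inequalities from \Cref{lemma:bound_Wi} and \Cref{lemma:bound_ei} together with the geometric growth $Z_j = 10^{k_j}$ with $k_0 < k_1 < \dots < k_m$, which gives $Z_j \le Z_i / 10^{\,i-j}$ for $j \le i$ and in particular $\sum_{j=0}^{i} Z_j \le \frac{10}{9} Z_i$ and $\sum_{j=0}^{i-1} Z_j \le \frac{1}{9} Z_i$. I would set up the induction hypothesis as: for all $j < i$, $\mathbb{E}[|W_j|] \le 8C Z_j$ and $\mathbb{E}[|e_j|] \le 5C Z_j$ (and also handle the base case $i = 0$, where there is no $e_0$ and \Cref{lemma:bound_Wi} gives $\mathbb{E}[|W_0|] \le C Z_0 \le 8 C Z_0$ since $E_0$ and the empty sum vanish).

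First I would bound the ``tail'' quantity $\mathbb{E}\bigl[|E_{i-1}| + \sum_{j=0}^{i-1} |W_j|\bigr]$ that appears on the right-hand side of both \Cref{lemma:bound_Wi} and \Cref{lemma:bound_ei}. Since $E_{i-1} = \{e_1, \dots, e_{i-1}\}$, the induction hypothesis gives $\mathbb{E}[|E_{i-1}|] = \sum_{j=1}^{i-1} \mathbb{E}[|e_j|] \le 5C \sum_{j=1}^{i-1} Z_j$ and $\sum_{j=0}^{i-1}\mathbb{E}[|W_j|] \le 8C \sum_{j=0}^{i-1} Z_j$, so the tail is at most $13 C \sum_{j=0}^{i-1} Z_j \le 13 C \cdot \frac{1}{9} Z_i < 2 C Z_i$. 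Then \Cref{lemma:bound_Wi} gives $\mathbb{E}[|W_i|] \le C Z_i + 2 C Z_i = 3 C Z_i \le 8 C Z_i$. For $e_i$, \Cref{lemma:bound_ei} gives $\mathbb{E}[|e_i|] \le 2(C Z_i + 2 C Z_i) = 6 C Z_i$, which unfortunately exceeds the claimed $5 C Z_i$; so I would need to be a bit more careful with constants — either sharpen the tail bound (note $13/9 < 1.45$, and with a cleaner accounting the tail might come out closer to $\frac{3}{2}C Z_i$ or less, but that still gives $\mathbb{E}[|e_i|] \le 2 \cdot \frac{5}{2} C Z_i = 5 C Z_i$ only if the tail is at most $\frac{3}{2} C Z_i$).

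So the main obstacle I anticipate is tracking the constants precisely enough to land exactly at $8$ and $5$: the recursion is $\mathbb{E}[|W_i|] \le C Z_i + T_i$ and $\mathbb{E}[|e_i|] \le 2(C Z_i + T_i')$ where $T_i, T_i'$ are tail sums, and for the bounds to close one needs $8C Z_i$ and $5C Z_i$ to both dominate after plugging the hypothesis into $T_i, T_i'$. Concretely, if the tail entering \Cref{lemma:bound_ei} is $\mathbb{E}[|E_{i-1}| + \sum_{j=0}^{i-1}|W_j|]$, using $5C\sum_{j\le i-1}Z_j + 8C\sum_{j \le i-1} Z_j = 13 C \sum_{j\le i-1} Z_j$ and the crude $\sum_{j\le i-1} Z_j \le Z_{i-1}\cdot\frac{10}{9} \le \frac{Z_i}{10}\cdot\frac{10}{9} = \frac{Z_i}{9}$ gives tail $\le \frac{13}{9} C Z_i$, hence $\mathbb{E}[|e_i|] \le 2(1 + \frac{13}{9})C Z_i = \frac{44}{9} C Z_i < 5 C Z_i$, which does close. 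Similarly $\mathbb{E}[|W_i|] \le (1 + \frac{13}{9}) C Z_i = \frac{22}{9} C Z_i < 8 C Z_i$. So both bounds go through with the stated constants, with plenty of slack, and the induction step is complete; I would then remark that the slack in the constants is deliberate to keep the arithmetic clean. Finally, I would note this lemma feeds directly into the proof of \Cref{thm:mdh-competitive}: summing $\mathbb{E}[|W_i|] + \mathbb{E}[|e_i|] \le 13 C \sum_i Z_i = O(\log n)\sum_i Z_i$, and since $\sum_i Z_i = O(Z_m) = O(\mathrm{OPT}(n))$ by geometric growth (with $Z_m = 10^{\ell+1} = O(\mathrm{OPT}(n))$), the total expected cost of MDH is $O(\log n)\cdot\mathrm{OPT}(n)$.
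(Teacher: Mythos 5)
Your approach is the same as the paper's: strong induction on $i$, plugging the inductive hypothesis into the recursive bounds of \Cref{lemma:bound_Wi} and \Cref{lemma:bound_ei}, and using geometric decay to absorb the sums. The base case and the $e_i$ step go through correctly. But there is a genuine error in the $W_i$ step.

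You state that the tail $\mathbb{E}\bigl[|E_{i-1}| + \sum_{j=0}^{i-1} |W_j|\bigr]$ ``appears on the right-hand side of both \Cref{lemma:bound_Wi} and \Cref{lemma:bound_ei}.'' This misreads \Cref{lemma:bound_Wi}: its tail is $\mathbb{E}\bigl[|E_i| + \sum_{j=0}^{i-1} |W_j|\bigr]$, with $E_i$ rather than $E_{i-1}$. The extra term $|e_i| = |E_i| - |E_{i-1}|$ cannot be ignored, and this forces an ordering of the inductive step: you must prove $\mathbb{E}[|e_i|] \le 5C Z_i$ first, and only then bound $\mathbb{E}[|W_i|]$. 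Concretely,
\begin{align*}
\mathbb{E}[|W_i|] &\le C Z_i + \mathbb{E}[|e_i|] + \mathbb{E}\Bigl[|E_{i-1}| + \sum_{j=0}^{i-1} |W_j|\Bigr] \\
&\le C Z_i + 5C Z_i + \tfrac{13}{9} C Z_i = \tfrac{67}{9} C Z_i < 8C Z_i,
\end{align*}
not $\tfrac{22}{9} C Z_i$ as you wrote. Your intermediate claim $\mathbb{E}[|W_i|] \le \tfrac{22}{9} C Z_i$ is not established by your argument (and is likely false). Fortunately the stated constant $8$ is generous enough that the correct calculation still closes; but as written, the induction step for $W_i$ is not valid. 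This is precisely why the paper's proof handles $e_{i+1}$ before $W_{i+1}$ within each inductive step.

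A smaller stylistic point: the proposal first runs a calculation that fails ($\mathbb{E}[|e_i|] \le 6C Z_i$), then backtracks and tightens. A clean write-up should just go straight to the tight bound $\sum_{j\le i-1} Z_j \le \frac{1}{9} Z_i$; the paper establishes this inequality up front as a lemma of its own.
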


\begin{proof}
First, note that for all $i \in [0, m-1]$,

\[ \sum_{j=0}^i Z_j = \sum_{j=0}^i 10^{k_j} \leq \sum_{h=0}^{k_i} 10^h = \frac{1}{9} \cdot \left(10^{k_i+1} - 1\right) \leq \frac{1}{9} \cdot 10^{k_{i+1}} = \frac{1}{9} \cdot Z_{i+1} \]

We now proceed by induction on $i$. The base case of $|W_0| = 0$ is trivial, and simply define $|e_0| = 0$. Let $i \in [0, m-1]$ be arbitrary, and assume the claim holds for all $j \in [0, i]$. Then

\begin{align*}
\mathbb{E}\left[|e_{i+1}|\right] &\leq 2\left(C \cdot Z_{i+1} +  \mathbb{E}\left[|E_i| + \sum_{j=0}^i |W_j|\right]\right) \\
&= 2C \cdot Z_{i+1} + 2\cdot\sum_{j=0}^i \mathbb{E}\left[|e_j|\right] + 2\cdot\sum_{j=0}^i \mathbb{E}\left[|W_j|\right] \\
&\leq 2C \cdot Z_{i+1} + 10C\cdot\sum_{j=0}^i Z_j + 16C\cdot\sum_{j=0}^i Z_j \\
&= 2C \cdot Z_{i+1} + 26C\cdot\sum_{j=0}^i Z_j \\
&\leq 2C \cdot Z_{i+1} + \frac{26C}{9} \cdot Z_{i+1} \\
&\leq 5C \cdot Z_{i+1}
\end{align*}

and 

\begin{align*}
\mathbb{E}\left[|W_{i+1}|\right] &\leq C \cdot Z_{i+1} + \mathbb{E}\left[|E_{i+1}| + \sum_{j=0}^i |W_j|\right] \\
&= C \cdot Z_{i+1} + \sum_{j=0}^{i+1} \mathbb{E}\left[|e_j|\right] + \sum_{j=0}^i \mathbb{E}\left[|W_j|\right] \\
&\leq C \cdot Z_{i+1} + 5C \cdot \sum_{j=0}^{i+1} Z_j + 8C \cdot \sum_{j=0}^i Z_j \\
&= 6C \cdot Z_{i+1} + 13C \cdot \sum_{j=0}^i Z_j \\
&\leq 6C \cdot Z_{i+1} + \frac{13C}{9} \cdot Z_{i+1} \\
&\leq 8C \cdot Z_{i+1}
\end{align*}

completing the induction.
\end{proof}

Finally, we now prove \Cref{thm:mdh-competitive}. The $O(\log n)$-competitiveness of Modified Doubled Harmonic is a direct consequence of \Cref{lemma:induct_Wiei} and the fact the geometric sums are asymtotically equal
to their largest summand.

\begin{proof}[Proof of \Cref{thm:mdh-competitive}]
Aggregating the edges $W = \left(\bigcup_{i=0}^m W_i\right) \cup E_m$, we have

\[\mathbb{E}\left[|W|\right] = \mathbb{E}\left[\left(\sum_{i=0}^m |W_i|\right) + |E_m|\right] \\
= \sum_{i=0}^m \mathbb{E}\left[|W_i|\right] + \sum_{i=1}^m \mathbb{E}\left[|e_i|\right]\]
Applying \Cref{lemma:induct_Wiei},
\[
\sum_{i=0}^m \mathbb{E}\left[|W_i|\right] + \sum_{i=1}^m \mathbb{E}\left[|e_i|\right]
\leq 8C \cdot \sum_{i=0}^m Z_i + 5C \cdot \sum_{i=0}^m Z_i = 13C \cdot \sum_{i=0}^m Z_i
\]
Simplifying yields
\[
13C \cdot \sum_{i=0}^m Z_i \leq \frac{13C}{9} \cdot 10^{k_m + 1} 
\leq 1.5C \cdot 10^{\ell + 2} 
= 150C \cdot 10^{\ell}
\leq 150C \cdot \text{OPT}(n)  
\]
Recalling $C = O(\log n)$ completes the proof.

\end{proof}

\bibliographystyle{abbrv}
\bibliography{bib.bib}

\newpage

\appendix

\section{Remedying Some Assumptions}\label{app:asmp}

\subsection{Minimum Distance 1 Between Servers }\label{subsec:one_serv_per_loc}
First, note that we may always assume the minimum distance between servers at \textit{different locations} is $1$, which can be easily remedied by a suitable scaling. Thus to resolve the assumption that the minimum distance between adjacent servers is $1$, the important piece to resolve is that no two servers exist at the same location. 

Suppose we have a monotone
neighbor algorithm $\mathcal{A}$ which is $\alpha$-competitive under the assumptions that all servers exist at different locations, and requests appear at server locations. We will construct a monotone neighbor algorithm $\mathcal{B}$ which is $2\alpha$-competitive and removes the first assumption. 

Again we may assume the instance given to $\mathcal{B}$ has minimum distance 1 between adjacent servers at different locations, which can be easily remedied by a suitable scaling. We do this primarily for ease of analysis. Let $\epsilon = \frac{1}{5n}$. On the instance given to $\mathcal{B}$, construct an instance for $\mathcal{A}$ by first placing one server per server location; and then perturbing the extra servers at the same location by at most $\epsilon$ (so that all servers are now at distinct locations). $\mathcal{B}$ then services request $r$ in the following way.

\begin{itemize}
    \item If $r$ appears at an available server $s$ in the instance of $\mathcal{B}$, place a simulated request $\tilde{r}$ at an available server $\tilde{s}$ in the same ``$\epsilon$-window'' in the instance of $\mathcal{A}$. Then $r \xrightarrow{\mathcal{B}} s$ and $\tilde{r} \xrightarrow{\mathcal{A}} \tilde{s}$.
    \item Otherwise, let $t$ be the location of $r$'s appearance. Place a simulated request $\tilde{r}$ at $t$ in the instance of $\mathcal{A}$. Given $\tilde{r} \xrightarrow{\mathcal{A}} s$ for an available server $s$, then $r \xrightarrow{\mathcal{B}} s$.
\end{itemize}

It is easy to see that $\mathcal{B}$ is a monotone neighbor algorithm given $\mathcal{A}$ is a monotone neighbor algorithm. It remains to show $\mathcal{B}$ is $2\alpha$-competitive. Note that each assignment in $\text{ON}_\mathcal{B}$ and $\text{OPT}_\mathcal{B}$ can differ from the corresponding assignment in $\text{ON}_\mathcal{A}$ and $\text{OPT}_\mathcal{A}$ by at most $\epsilon$. Thus 

\[\text{ON}_\mathcal{B} \leq \text{ON}_\mathcal{A} + n\epsilon = \text{ON}_\mathcal{A} + \frac{1}{5}\] 

and 

\[\text{OPT}_\mathcal{B} \geq \text{OPT}_\mathcal{A} - n\epsilon = \text{OPT}_\mathcal{A} - \frac{1}{5}\]

If $\text{OPT}_\mathcal{B} = 0$, then $\text{ON}_\mathcal{B} = 0$, because all requests appeared at available servers. Otherwise, $\text{OPT}_\mathcal{B} > 0$, and so some request is forced to match to a server at a different location. Because the minimum distance between adjacent servers (at different locations) is $1$, we must have $\text{ON}_\mathcal{B} \geq \text{OPT}_\mathcal{B} \geq 1$. The same property holds for the instance of $\mathcal{A}$ (where some request is forced to assign outside of its ``$\epsilon$-window''), and so $\text{ON}_\mathcal{A} \geq \text{OPT}_\mathcal{A} \geq 1 - 2\epsilon \geq \frac{3}{5}$. Thus

\[\frac{\mathbb{E}\left[\text{ON}_\mathcal{B}\right]}{\text{OPT}_\mathcal{B}} \leq \frac{\mathbb{E}\left[\text{ON}_\mathcal{A}\right] + \frac{1}{5}}{\text{OPT}_\mathcal{A} - \frac{1}{5}} \leq \left(\frac{1 + \frac{1}{3}}{1 - \frac{1}{3}}\right)\left(\frac{\mathbb{E}\left[\text{ON}_\mathcal{A}\right]}{\text{OPT}_\mathcal{A}}\right) \leq 2\alpha \]

and so $\mathcal{B}$ is $2\alpha$-competitive, as desired.

\subsection{Requests Appear at Server Locations}\label{subsec:asmp_req_at_serv}
Suppose we have a monotone neighbor algorithm $\mathcal{B}$ which is $\beta$-competitive under the assumption that requests appear at server locations. We will construct a monotone neighbor algorithm $\mathcal{C}$ which is $(2\beta + 1)$-competitive and makes no such assumption. Specifically, $\mathcal{C}$ services request $r$ in the following way.

\begin{itemize}
    \item Let $t$ be the server closest to $r$, regardless of whether $t$ is available or not.
    \item Place a simulated request $\tilde{r}$ at the location of $t$ in the running instance of $\mathcal{B}$.
    \item Given $r \xrightarrow{\mathcal{B}} s$ for an available server $s$, then $r \xrightarrow{\mathcal{C}} s$.
\end{itemize}

Let $S = \{s_1, s_2, \dots, s_n\}$ be the set of servers in the instance and $R = \{r_1, r_2, \dots, r_n\}$ be the set of requests. Without loss of generality, assume the servers of $S$ and the requests of $R$ have been written, according to their locations, in increasing order of coordinate value. Let $t_i$ be the server nearest to $r_i$, regardless of whether it is available or not upon appearance of $r_i$. Then the set $T = \{t_1, t_2, \dots, t_n\}$ is written as ``ordered'' as well.

First, we show $\mathcal{C}$ is $(2\beta + 1)$-competitive. Suppose $\mathcal{B}$ assigns $\tilde{r}_i$ to $s_{\sigma(i)}$ for each $i$. Then OPT$_{\mathcal{B}} = \sum_{i=1}^n d(t_i, s_i)$, ON$_{\mathcal{B}} = \sum_{i=1}^n d(s_{\sigma(i)}, t_i)$, OPT$_{\mathcal{C}} = \sum_{i=1}^n d(r_i, s_i)$, and ON$_{\mathcal{C}} = \sum_{i=1}^n d(s_{\sigma(i)}, r_i)$, where the structure of OPT$_{\mathcal{B}}$ and OPT$_{\mathcal{C}}$ is given by \cite{optimal-matching-raghvendra}. Note OPT$_{\mathcal{C}} \geq \sum_{i=1}^n d(r_i, t_i)$ since $t_i$ is the nearest server to $r_i$ for each request $r_i$. Then we have

\begin{align*}
\text{OPT}_{\mathcal{C}} &= \frac{1}{2} \left(\text{OPT}_{\mathcal{C}} + \text{OPT}_{\mathcal{C}}\right) \\
&\geq \frac{1}{2}\left(\sum_{i=1}^n d(r_i, s_i) + \sum_{i=1}^n d(r_i, t_i)\right) \\
&\geq \frac{1}{2}\left(\sum_{k=1}^n d(t_i, s_i)\right) \\
&= \frac{1}{2}\text{OPT}_{\mathcal{B}}
\end{align*}

and

\[ \text{ON}_{\mathcal{C}} = \sum_{i=1}^n d(s_{\sigma(i)}, r_i) \leq \sum_{i=1}^n d(s_{\sigma(i)}, t_i) + \sum_{i=1}^n d(r_i, t_i) \leq \text{ON}_{\mathcal{B}} + \text{OPT}_{\mathcal{C}} \]

Thus

\[ \frac{\mathbb{E}\left[\text{ON}_{\mathcal{C}}\right]}{\text{OPT}_C} \leq \frac{\mathbb{E}\left[\text{ON}_{\mathcal{B}}\right] + \text{OPT}_C}{\text{OPT}_{\mathcal{C}}} \leq \frac{\mathbb{E}\left[\text{ON}_{\mathcal{B}}\right]}{\frac{1}{2}\text{OPT}_{\mathcal{B}}} + 1 = 2\left(\frac{\mathbb{E}\left[\text{ON}_\mathcal{B}\right]}{\text{OPT}_\mathcal{B}}\right) + 1 \leq 2\beta + 1 \]

as desired. Further, it is easy to see that $\mathcal{C}$ is a neighbor algorithm given $\mathcal{B}$ is a neighbor algorithm. 
Lastly, we must show $\mathcal{C}$ is monotone. Indeed, the sets of points closest to $s_i$ for each server $s_i$ partition the real line into disjoint intervals (where all servers at the same location are understood to share the same interval). Any requests appearing within the same interval are treated identically in $\mathcal{C}$. This discretization ensures that because $\mathcal{B}$ is monotone and thus satisfies the condition in \Cref{lemma:neighbor_monotone}, $\mathcal{C}$ satisfies the same condition, and so it is also monotone.

\section{Proof that Doubled Harmonic is Not Monotone}
\label{sec:DHNotMimickable} 

Consider the following instance. 

{\scalefont{.75}
\begin{figure}[h]
\begin{center}
\begin{tikzpicture}[every edge quotes/.style = {auto, font=\footnotesize, sloped}]
\node[draw, shape = circle, fill = black, minimum size = 0.1cm, inner sep=0pt, label={below:$s_1$}] (s_1) at  (1,0)  {};
\node[draw, shape = circle, fill = black, minimum size = 0.1cm, inner sep=0pt, label={below:$s_2$}] (s_2) at  (2.5,0)  {};
\node[draw, shape = circle, fill = black, minimum size = 0.1cm, inner sep=0pt, label={below:$s_3$}] (s_3) at  (5,0)  {};
\node[draw, shape = circle, fill = black, minimum size = 0.1cm, inner sep=0pt, label={below:$s_4$}] (s_4) at  (11,0)  {};
\draw (s_1) edge["4"]  (s_2);
\draw (s_2) edge["7"]  (s_3);
\draw (s_3) edge["20"]  (s_4);
\end{tikzpicture}
\label{fig:DH-mono-violation}
\end{center}
\end{figure}
}

Suppose that $r_1$ arrives at $s_2$. Then, $r_1 \xrightarrow{\mathrm{DH}} s_2$. Next, suppose $r_2$ arrives at $s_2$. Then the optimal matching of $r_1$ and $r_2$ has cost 4, the estimate $Z$ is set to 10, the set of imaginary servers is set to $S_\iota = \{s_1, s_3, s_4\}$, and the set of available servers is set to $S_\rho = \{s_1, s_3, s_4\}$. Clearly the optimal matching $M$ between $S_\iota$ and $S_\rho$ just assigns each server to itself. Suppose DH then performs the imaginary move $r_2 \rightarrow s_1$ and the subsequent corrective move $s_1 \rightarrow s_1$. This leaves $S_\iota = \{s_3, s_4\}$ and $S_\rho = \{s_3, s_4\}$. Now, we show that the assignment of $r_3$ is not monotone.

Suppose that $r_3$ appears at $s_1$. Then, the optimal matching of the requests has cost 7. DH performs the imaginary move $r_3 \rightarrow s_3$ and the subsequent corrective move $s_3 \rightarrow s_3$, and so DH assigns $r_3$ to $s_3$ with probability 1.

Suppose that $r_3$ instead appears at $s_2$. Now, the optimal matching of the requests has cost 11. The estimate $Z$ is then set to 100, and the adjustment operation is performed. With probability $\frac{4}{11}$, DH simulates assigning $r_1$ to $s_2$ and $r_2$ to $s_3$. The imaginary move of $r_3$ is then to $s_1$ with probability $\frac{27}{31}$, and the subsequent corrective move assigns $r_3$ to $s_3$. The imaginary move of $r_3$ to $s_4$ has probability $\frac{4}{31}$, and the subsequent corrective move assigns $r_3$ to $s_4$. Thus with nonzero probability, DH assigns $r_3$ to $s_4$ (and thus NOT to $s_3$) in this case.

Thus the probability that DH assigns $r_3$ to $s_3$ is \textit{higher} for arrival at $s_1$ (probability 1) than for arrival at $s_2$ (probability $< 1$). Further note that this violation of monotonicity is induced by the fact that an adjustment operation will not occur if $r_3$ arrives at $s_1$, but it will occur if $r_3$ arrives at $s_2$. Thus DH is not monotone.

\end{document}